\newtheorem{proposition}{Proposition}
\newcolumntype{C}{>{\centering\arraybackslash}X} 
\newtheorem{thm}{Theorem}
\newtheorem{rem}{Remark}
\newtheorem{lem}{Lemma}
\newtheorem{proof}{proof}
\begin{document}

\title{IRS-aided Wireless Powered MEC Systems: TDMA or NOMA for Computation Offloading?}

\author{}

\author{Guangji~Chen,~\IEEEmembership{}
        Qingqing~Wu,~\IEEEmembership{Member,~IEEE,}
       Wen~Chen,~\IEEEmembership{Senior Member,~IEEE,}
       Derrick~Wing~Kwan~Ng,~\IEEEmembership{Fellow,~IEEE,}
       and Lajos~Hanzo,~\IEEEmembership{Fellow,~IEEE}
        \thanks{G. Chen and Q. Wu are with the State Key Laboratory of Internet of Things for Smart City, University of Macau, Macao 999078, China (email: guangjichen@um.edu.mo; qingqingwu@um.edu.mo).

         W. Chen is with the Department of Electronic Engineering,
         Shanghai Institute of Advanced Communications and Data Sciences, Shanghai
         Jiao Tong University, Minhang 200240, China (e-mail: wenchen@sjtu.edu.cn).

         Derrick Wing Kwan Ng is with the School of Electrical Engineering and
         Telecommunications, UNSW Sydney, Sydney, NSW 2052, Australia (e-mail:
         w.k.ng@unsw.edu.au).

         L. Hanzo is with the Department of Electronics and Computer Science,
         University of Southampton, Southampton SO17 1BJ, U.K. (e-mail:
         lh@ecs.soton.ac.uk).}
         }


\maketitle
\vspace{-25pt}
\begin{abstract}
An intelligent reflecting surface (IRS)-aided wireless powered mobile edge computing (WP-MEC) system is conceived, where each device's computational task can be divided into two parts for local computing and offloading to mobile edge computing (MEC) servers, respectively. Both time division multiple access (TDMA) and non-orthogonal multiple access (NOMA) schemes are considered for uplink (UL) offloading. Given the capability of IRSs in intelligently reconfiguring wireless channels over time, it is fundamentally unknown which multiple access scheme is superior for MEC UL offloading. To answer this question, we first investigate the impact of three different dynamic IRS beamforming (DIBF) schemes on the computation rate of both offloading schemes, based on the flexibility for the IRS in adjusting its beamforming (BF) vector in each transmission frame. Under the DIBF framework, computation rate maximization problems are formulated for both the NOMA and TDMA schemes, respectively, by jointly optimizing the IRS passive BF and the resource allocation. We rigorously prove that offloading adopting TDMA can achieve the same computation rate as that of NOMA, when all the devices share the same IRS BF vector during the UL offloading. By contrast, offloading exploiting TDMA outperforms NOMA, when the IRS BF vector can be flexibly adapted for UL offloading. Despite the non-convexity of the computation rate maximization problems for each DIBF scheme associated with highly coupled optimization variables, we conceive computationally efficient algorithms by invoking alternating optimization. Our numerical results demonstrate the significant performance gains achieved by the proposed designs over various benchmark schemes and also unveil that the optimal time allocated to downlink wireless power transfer can be effectively reduced with the aid of IRSs, which is beneficial for both the system's spectral efficiency and its energy efficiency.
\end{abstract}

\begin{IEEEkeywords}
IRS, wireless powered mobile edge computing, dynamic beamforming, NOMA, TDMA.
\end{IEEEkeywords}


\IEEEpeerreviewmaketitle
\section{Introduction}
With the rapid development of popular Internet-of-Everything (IoE) technologies, the unprecedented proliferation of mobile sensors, electronic tablets, and wearable devices is set to continue in support of smart transportation, smart homes, and smart cities \cite{8879484}. For realizing the IoE, next generation wireless networks are expected to support massive number of connections and accommodate huge data traffic. As such, superior multiple access (MA) schemes are required to attain high spectral efficiency (SE) for a massive number of IoE devices in next generation wireless networks\cite{8114722}. Recently, it has been shown that non-orthogonal multiple access (NOMA) is capable of improving the SE by allowing multiple users to simultaneously access the same spectrum. Therefore, NOMA has been recognized as one of the key technologies in next generation wireless networks\cite{8114722}.

On the other hand, an IoE device is often equipped with a low-performance processor and limited battery capacity, given their practical size and cost constraints. In particular, the emerging applications, such as unmanned driving and automatic navigation, generally rely on the execution of low-latency and computation-intensive tasks, thus imposing new challenges on IoE devices \cite{8016573}. As a remedy, by incorporating radio frequency (RF)-based wireless power transmission (WPT) and mobile edge computing (MEC), wireless powered MEC (WP-MEC) becomes a promising solution for granting self-sustainability and high computational capabilities to IoE systems \cite{7442079, 8304010, 8334188, 8434285, 9140412, 8986845, 9312671, 8537962, 9179779,8234686}. Specifically, RF-based WPT enables energy harvesting (EH) from RF signals and it is capable of prolonging the battery recharge-period of devices \cite{8234686, 7843670, 8421584}. To improve the computational capabilities for IoE systems, the MEC technology enables IoE devices to offload their tasks to nearby MEC servers in real time, which can compute their tasks remotely \cite{8016573}.

To enhance the computational efficiency of traditional WP-MEC systems, sophisticated resource allocation relying on optimization objectives, such as computation rate maximization \cite{8304010, 8334188, 8434285}, energy consumption \cite{8986845, 9312671, 8537962}, and latency minimization \cite{9179779}, etc, has been proposed. For instance, in \cite{7442079}, the WP-MEC framework was proposed for a single-user setup, where the probability of successfully processing a given amount of data was maximized subject to both end-to-end latency and EH constraints. In general, MEC supports a pair of basic operational modes, namely binary and partial offloading modes \cite{7442079, 8304010, 8334188, 8434285, 9140412}. Specifically, for the partial offloading mode, the computational task can be divided into two parts for partial local computing and offloading, respectively, while for the binary offloading mode, the computational task cannot be partitioned, hence it is either executed at the local device or offloaded to MEC servers \cite{9140412}. Based on the concept of  binary and partial offloading modes, the corresponding computation rate maximization problem was investigated in \cite{8304010, 8334188, 8434285 } for a multi-user setup, where time division multiple access (TDMA) was adopted for uplink (UL) offloading. As a further advance, the superiority of employing NOMA over TDMA in WP-MEC systems was quantified in terms of its energy efficiency improvement \cite{8986845, 9312671, 8537962} and latency reduction \cite{9179779}. Therefore, NOMA is regarded as an attractive scheme for UL offloading in traditional WP-MEC systems.

However, the efficiency of both the downlink (DL) WPT and UL offloading may become severely degraded by the wireless channel attenuation between transceivers, which thus fundamentally limits the performance of WP-MEC systems. With the goal of tacking this issue, the authors of \cite{8234686} exploited the multiple-input multiple-output (MIMO) technique for improving the WPT efficiency and studied the corresponding energy consumption minimization problem. Although the massive MIMO technology considerably improves the efficiency of both WPT and offloading by exploiting the huge beamforming (BF) gain \cite{8271992}, \cite{9027954}, the associated high hardware cost and energy consumption are still grave obstacles in the way of its practical implementation. Recently, intelligent reflecting surfaces (IRSs) have been proposed as a cost-effective technology for improving the spectral efficiency and energy efficiency of next generation wireless networks \cite{8910627, 9326394, 9424177}. Specifically, an IRS is a planar array comprised of a large number of low-cost passive reflecting elements, which can reflect incident signals and intelligently adapt their phase shifts according to the real-time channel conditions\cite{8910627}. As such, IRSs are capable of reconfiguring the wireless propagation environment for achieving e.g., signal enhancement and/or interference suppression. In particular, the fundamental squared-power gain of IRSs was originally unveiled in \cite{8811733}, which then inspired intensive research interests in investigating various IRS-aided wireless systems.

The new research paradigms of IRS-aided wireless information transmission (WIT), WPT, and MEC have been extensively studied in the literature. For IRS-aided WIT systems, joint passive BF at IRSs and active BF at base stations (BSs) was designed either for minimizing the transmit power of BSs or for maximizing the system capacity, e.g., \cite{8811733, 9139273, 9110869, 9279253, 9039554, 9427474}. As a further practical development, the analysis and optimization of IRS-aided wireless communications were studied by considering both discrete phase shifts \cite{8930608, 9295369} and amplitude-dependent phase shifts \cite{9115725}. In addition to exploiting IRSs for improving the WIT performance, the IRS technology is also appealing for WPT in IoE applications, thanks to its beneficial passive BF gain. Specifically, a promising line of research focused on passive BF design for simultaneous wireless information and power transfer (SWIPT) systems \cite{8941080, 9257429, 9133435}. Another line of research investigated IRS-aided wireless powered communication networks (WPCNs), where the devices first harvest energy in the DL and then transmit information in the UL \cite{9214497, 9298890, 9400380}. However, in traditional MEC systems, the task offloading efficiency may not be satisfactory due to the harsh propagation conditions of the wireless links. To address this issue, the authors of \cite{9270605,9133107} exploited the IRS technology for improving the offloading efficiency of MEC systems by studying the associated computation rate maximization and offloading latency minimization problems, respectively.

Given the aforementioned benefits of the IRS technique, its employment in WP-MEC systems is attractive for realizing IoE, since both the efficiency of DL WPT and UL offloading can potentially be improved. Additionally, next generation wireless networks require superior MA schemes for supporting a massive number of IoE devices. Therefore, integrating IRSs with efficient MA schemes in WP-MEC systems is essential for granting satisfactory experience of IoE applications in next generation wireless networks. For traditional MEC and WP-MEC systems operating without IRSs, the authors of \cite{8986845, 9312671, 8537962} demonstrated the superiority of NOMA over TDMA for UL offloading under the assumption of given wireless channels within a channel coherence time duration. However, these results may not be applicable to the new family of IRS-aided WP-MEC systems, since the IRS is capable of proactively establishing favorable time-varying wireless channels, which could introduce different impacts on MA schemes. As such, it still remains unknown which MA scheme is more efficient for UL offloading in IRS-aided WP-MEC systems. This knowledge-gap motivates us to investigate the achievable computation rate in such scenarios by considering the interplay between IRSs and MA schemes. To characterize the achievable computation rate of IRS-aided WP-MEC systems, the main challenges we identify are as follows: 1) the performance comparison between NOMA and TDMA  has to be carried out for IRS-aided UL offloading by considering the favorable time-varying wireless channels controlled by IRSs; 2) the specific IRS configuration required for reaping the potential benefits of WP-MEC systems has to be identified; 3) the design of IRS BF and resource allocation for WP-MEC systems is generally intractable.

To address the above issues, this paper investigates the achievable computation rate maximization problems of IRS-aided WP-MEC systems by considering two types of offloading schemes, i.e., TDMA and NOMA. Specifically, we focus our attention on a typical setup, where a hybrid access point (HAP) is exploited both as the energy transmitter and the MEC server. Moreover, an IRS is deployed for enhancing the efficiency of both DL WPT and UL offloading.
Our main contributions are summarized as follows:
\begin{itemize}
  \item We propose an offloading framework for investigating the performance of IRS-aided WP-MEC systems, where three different levels of dynamic IRS beamforming (DIBF) schemes are considered: \textbf{Case 1}: both DL WPT and UL offloading share the same IRS BF vector; \textbf{Case 2}: two different IRS BF vectors are exploited for DL WPT and UL offloading, respectively; \textbf{Case 3}: the IRS BF vectors can be further adapted for UL offloading with respect to each individual device. Under this framework, we formulate the corresponding computation rate maximization problems by jointly optimizing the resource allocation and the IRS BF for the aforementioned three cases.
  \item We analytically show that appropriately adjusting the IRS BF vectors for UL offloading is capable of improving the computation rate of TDMA, while it is not beneficial for that of NOMA. By analyzing the relationship between the computation rate maximization problems of TDMA and NOMA, we prove that the computation rate achieved by TDMA is the same as that by NOMA for both \textbf{Case 1} and \textbf{Case 2}. By contrast, since TDMA-based IRS-aided WP-MEC systems are capable of benefiting from varying the IRS BF vectors in the UL offloading stage, the computation rate of TDMA exceeds that of NOMA for \textbf{Case 3}.
  \item To gain insights into the beneficial effect of IRSs on WP-MEC systems, we first consider a single-user setup, where we derive a threshold-based UL offloading activation condition. Specifically, we demonstrate that UL offloading is activated iff the transmit power of the HAP is above a certain threshold and increasing the number of IRS elements is capable of reducing the threshold. For the more general multi-user setup, we develop an efficient alternating optimization (AO) algorithm for solving the resultant problems, where the resource allocation and the IRS BF design subproblems are solved alternatingly.
  \item Our numerical results show that the proposed IRS-aided WP-MEC designs are capable of substantially improving the computation rate compared to the benchmark schemes. It is also found that exploiting IRSs not only increases the total energy harvested via DL WPT, but also leaves more time available for UL offloading, which unveils a further benefit of IRSs for WP-MEC systems. Moreover, the computation rate of \textbf{Case 3} significantly exceeds that of \textbf{Case 1} and \textbf{Case 2}, while the performance loss of \textbf{Case 1} compared to \textbf{Case 2} is negligible. The results imply that the associated signaling overhead can be reduced by opting for \textbf{Case 1} instead of \textbf{Case 2} at the cost of a modest performance erosion.
\end{itemize}

The rest of this paper is organized as follows. Section II presents our system model and problem formulations. Section III provides the theoretical performance comparison of NOMA and TDMA-based UL offloading. Section IV focuses on studying the impact of IRSs on the UL offloading activation condition. Section V proposes computationally efficient algorithms for solving the formulated problems for the different scenarios. Section VI provides numerical results for evaluating the proposed designs. Finally, Section VII concludes the paper.

\emph{Notations:} $\mathbb{C}^{x \times y}$ stands for the set of complex $x \times y$ matrix. $\mathbb{Z}^ +$ represents the set of positive number. For a complex-valued vector ${\bf{a}}$, the $n$-th entry is denoted by ${\left[ {\bf{a}} \right]_n}$, ${{\bf{a}}^H}$ and ${{\bf{a}}^T}$ denote Hermitian transpose and transpose, respectively, ${\mathop{\rm diag}\nolimits} \left( {\bf{a}} \right)$ denotes a diagonal matrix with each diagonal entry being the corresponding entry in ${\bf{a}}$. The real part and the phase of a complex number $c$ are denoted by ${\mathop{\rm Re}\nolimits} \left( c \right)$ and $\arg \left( c \right)$, respectively. $\mathcal{O}\left(  \cdot  \right)$ is the big-O computational complexity notation. ${{\partial f\left(  \cdot  \right)} \mathord{\left/
 {\vphantom {{\partial f\left(  \cdot  \right)} {\partial x}}} \right.
 \kern-\nulldelimiterspace} {\partial x}}$ denotes the partial derivation of a function ${f\left(  \cdot  \right)}$ with respect to a variable $x$.


\section{System Models and Problem Formulations}

\subsection{System Model}
\vspace{-4pt}
As shown in Fig. \ref{model}, an IRS-aided WP-MEC system is considered, which consists of a HAP, an IRS, and $K$ wireless-powered devices. In particular, a MEC server and an RF energy transmitter are integrated at the HAP so that it can broadcast wireless energy to devices and execute computational tasks, while each device has a rechargeable battery and an EH circuit component which can store the harvested energy to power its operation. The HAP and all the devices are equipped with a single-antenna\footnote{{To unveil the potential benefits of IRSs in WP-MEC systems for improving the computation rate performance, we assume that the HAP is equipped with a single-antenna for tractability. It is commonly adopted in previous works \cite{8334188, 8434285, 9140412}. The case of multiple antennas is left for our future work.}} and the IRS has $N$ reflecting elements. To ease their practical implementation, all the devices and the HAP operate over the same frequency band. The DL WPT and UL offloading are assumed to operate in time-division multiplexing  manner by following the typical ``harvest-and-then offload'' protocol of \cite{8334188, 8434285}, as shown in Fig. 2. Without loss of generality, we assume that each channel coherence block consists of multiple frames and the operation time of each frame is denoted by $T$. To characterize the maximum achievable computation rate, the channel state information (CSI) of all channels is assumed to be perfectly acquired by the HAP, based on the channel acquisition methods discussed in \cite{8910627}. The channels from the HAP to device $k$, from the HAP to the IRS, and from the IRS to device $k$ are denoted by ${h_{d,k}} \in \mathbb{C}$, ${\bf{g}} \in \mathbb{C}^{N \times 1}$, and ${\bf{h}}_{r,k}^H \in \mathbb{C}^{1 \times N}$, $\forall k \in \left\{ {1, \ldots K} \right\}$, respectively.

In this paper, we assume that the partial computation offloading mode is used. Specifically, the computational tasks of each device can be partitioned into two parts: one for local computing and the other for offloading to the HAP. Similar to \cite{8334188, 8434285, 8986845}, we assume that the local computing at each device adopts a different component from that used for EH and task offloading. Thus, local computing can be executed throughout the entire frame of duration $T$. The number of central processor unit (CPU) cycles required for computing one bit of raw data at each device is denoted by $C$ and its value is determined by the properties of the specific application \cite{8334188}. Let ${f_k}$ denote the CPU's chosen frequency (cycles per second) at device $k$. Therefore, the bits computed locally by device $k$ and the corresponding dissipated energy by local computing are ${{T{f_k}} \mathord{\left/
{\vphantom {{T{f_k}} C}} \right.
 \kern-\nulldelimiterspace} C}$ and $T{\gamma _c}f_k^3$, respectively \cite{8334188}. Note that ${\gamma _c}$ represents the computational energy efficiency of specific CPU chip, which depends on the architecture of the chip \cite{7442079}.
\begin{figure}
\begin{minipage}[t]{0.45\linewidth}
\centering
\includegraphics[width=7cm]{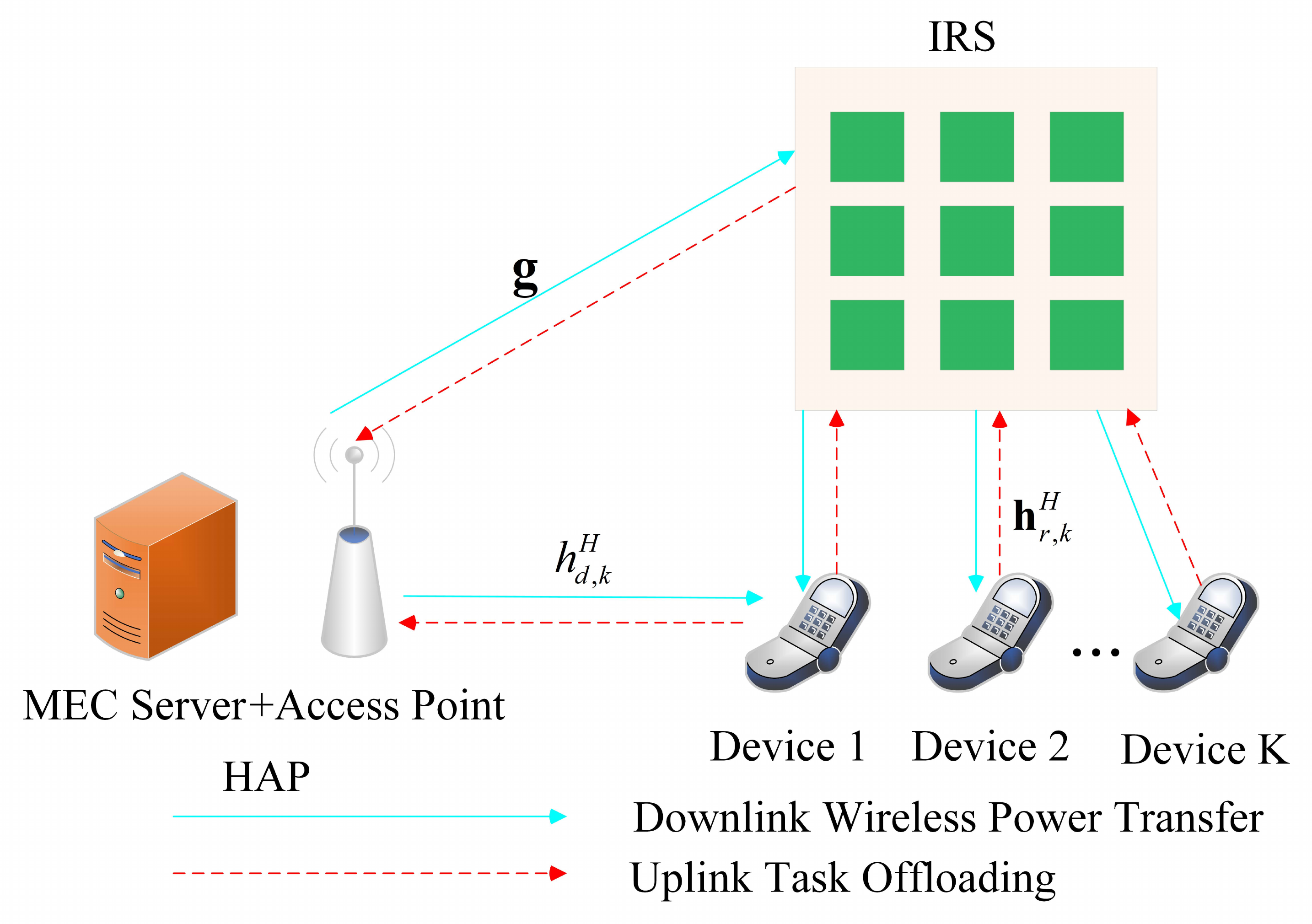}
\vspace{-15pt}\caption{An IRS-aided wireless powered MEC system.}
\label{model}
\end{minipage}%
\hfill
\begin{minipage}[t]{0.45\linewidth}
\centering
\includegraphics[width=7cm]{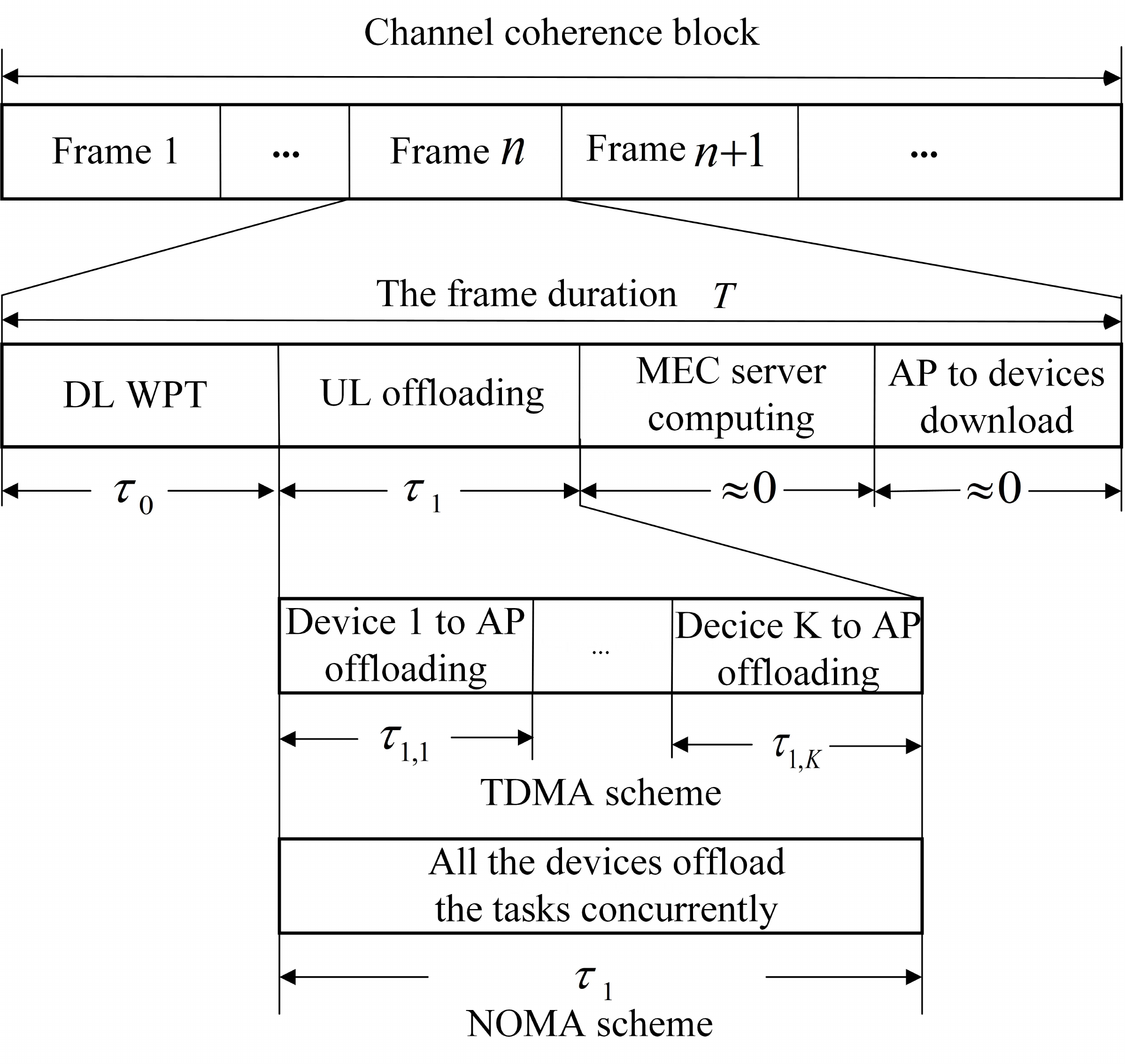}
\vspace{-15pt}\caption{The structure of a transmission frame.}
\label{frame}
\end{minipage}
\vspace{-16pt}
\end{figure}

As shown in Fig. \ref{frame}, the transmission frame is comprised of four segments. First, the HAP broadcasts wireless energy to all devices with the aid of the IRS. Then, all the devices can decide to offload their tasks to the HAP by using TDMA or NOMA. In the third stage, the MEC server at the HAP executes the computational tasks offloaded from all devices. Finally, the computational results are downloaded from the HAP to each device. Since the MEC server has much higher computational capability than those of the devices and the amount of data representing the computational results is negligible \cite{8234686}, the time duration of the third and fourth stages can be neglected as in \cite{8334188, 8434285}. The details of the first and second stages are described as follows.

For DL WPT, an energy signal is broadcasted by the HAP at a constant transmit power ${P_E}$ for a time duration of ${\tau _0}$. The reflection phase-shift matrix of the IRS for DL WPT is denoted by ${{\bf{\Theta }}_0} = {\mathop{\rm diag}\nolimits} \left( {{e^{j{\theta _1}}}, \ldots ,{e^{j{\theta _N}}}} \right)$, where ${\theta _n} \in \left[ {0,2\pi } \right),{\rm{ }}\forall n$. Since the noise power is much lower than the power received from the HAP\cite{6678102}, we assume that the energy harvested from noise is negligible. Based on the  linear EH model\footnote{{Although the non-linear EH model can capture the relationship between the harvested RF power and the converted direct current power more precisely \cite{7843670}, the key results regarding to theoretical performance comparison between NOMA and TDMA for UL offloading are directly applicable to a more general non-linear EH model. It will be discussed later in Remark 3. The linear EH model is adopted here to facilitate us to explicitly demonstrate the impact of IRS on UL offloading activation condition. }} of \cite{8304010, 8334188, 8434285}, the energy harvested at device $k$ is
\begin{align}\label{C1dev}
{E_k} = \eta {\tau _0}{P_E}{\left| {h_{d,k}^H + {\bf{h}}_{r,k}^H{{\bf{\Theta }}_0}{\bf{g}}} \right|^2} = \eta {\tau _0}{P_E}{\left| {h_{d,k}^H + {\bf{q}}_k^H{{\bf{v}}_0}} \right|^2},
\end{align}
where $\eta  \in \left( {0,1} \right]$ represents the energy conversion efficiency of each device, ${\bf{q}}_k^H = {\bf{h}}_{r,k}^H{\mathop{\rm diag}\nolimits} \left( {\bf{g}} \right)$ and ${{\bf{v}}_0} = {\left[ {{e^{j{\theta _1}}}, \ldots ,{e^{j{\theta _N}}}} \right]^T}$ denotes the IRS BF vector of the DL WPT.

At the UL offloading stage, all devices can offload their tasks to the HAP by the TDMA or NOMA schemes. Adopting different IRS BF vectors during the NOMA/TDMA frame, i.e., DIBF, is in principle possible and may potentially improve the computation rate at the cost of additional signaling overhead. This is because the algorithm is usually executed by the HAP due to the limited processing  capability of the IRS and thus the HAP has to feed back the IRS BF vectors to the IRS for reconfiguration. Specifically, we propose three different levels of DIBF schemes as follows: \textbf{Case 1}: The same IRS BF vector is adopted during the entire frame; \textbf{Case 2}: The IRS BF vectors of the DL WPT and UL offloading can be different, but the same IRS BF vector is adopted in the UL offloading stage for all devices; \textbf{Case 3}: The IRS BF vectors of the DL WPT and UL offloading of each device can be different, i.e., $K$ different IRS BF vectors can be used for UL offloading. Considering the aforementioned three cases, the details of UL offloading using TDMA and NOMA are presented as follows.
\subsubsection{Offloading Using TDMA}
The time duration of offloading, namely ${\tau _1}$, is further partitioned into $K$ orthogonal time slots (TSs), which are denoted by ${\tau _{1,k}},\forall k \in \left\{ {1, \ldots K} \right\}$. Device $k$ offloads its data in its $k$-th TS ${\tau _{1,k}}$. Let ${p_k}$ denote the transmit power of device $k$. For \textbf{Case 1}, the DL WPT stage and the UL offloading stage share the same IRS BF vector ${{\bf{v}}_0}$. In this case, the achievable offloading sum-rate is written as
\begin{align}\label{C5dev}
R_{{\rm{off - case1}}}^{{\rm{TDMA}}} = B\sum\limits_{k = 1}^K {\tau _{1,k}{{\log }_2}\left( {1 + \frac{{{p_k}{{\left| {h_{d,k}^H + {\bf{q}}_k^H{{\bf{v}}_0}} \right|}^2}}}{{{\sigma ^2}}}} \right)},
\end{align}
where $B$ represents the system bandwidth and ${{\sigma ^2}}$ denotes the power of the additive white Gaussian noise at the HAP.

For \textbf{Case 2}, we adopt ${{\bf{v}}_1} = {\left[ {{e^{j{\varphi _{1,1}}}}, \ldots ,{e^{j{\varphi _{1,N}}}}} \right]^T}$ to denote the common IRS BF vector in the UL offloading stage. The achievable offloading sum-rate is represented as
\begin{align}\label{C4dev}
R_{{\rm{off - case2}}}^{{\rm{TDMA}}} = B\sum\limits_{k = 1}^K {\tau _{1,k}{{\log }_2}\left( {1 + \frac{{{p_k}{{\left| {h_{d,k}^H + {\bf{q}}_k^H{{\bf{v}}_1}} \right|}^2}}}{{{\sigma ^2}}}} \right)} .
\end{align}

For \textbf{Case 3}, the IRS BF vector used for UL offloading in $k$-th TS is denoted by ${{\bf{v}}_{1,k}} = {\left[ {{e^{j{\varphi _{k,1}}}}, \ldots ,{e^{j{\varphi _{k,N}}}}} \right]^T}$. Thus, the achievable offloading sum-rate is given by
\begin{align}\label{C3dev}
R_{{\rm{off - case3}}}^{{\rm{TDMA}}} = B\sum\limits_{k = 1}^K {{\tau _{1,k}}{{\log }_2}\left( {1 + \frac{{{p_k}{{\left| {h_{d,k}^H + {\bf{q}}_k^H{{\bf{v}}_{1,k}}} \right|}^2}}}{{{\sigma ^2}}}} \right)} .
\end{align}
\vspace{-8pt}
\subsubsection{Offloading Using NOMA}
When NOMA is adopted for UL offloading, all the devices simultaneously transmit their respective data to the HAP throughout the whole time duration of ${\tau _1}$ at the transmit power ${{p_k}}$. To mitigate the inter-user interference, successive interference cancellation (SIC) is performed at the HAP. Taking device $k$ as an example, the HAP will first decode the message of device $i$, $\forall i < k$, before decoding the message of device $k$. Then, the offloading message of device $i$, $\forall i < k$, will be subtracted from the composite signal. The offloading message received from device $i$, $\forall i > k$, is treated as noise. For \textbf{Case 1}, the IRS BF vector is denoted by ${{\bf{v}}_0}$ for the UL offloading stage using NOMA.
Thus, the achievable offloading sum-rate for all the devices is \cite{9400380}
\begin{align}\label{C7dev}
R_{{\rm{off - case1}}}^{{\rm{NOMA}}} = B{\tau _1}{\log _2}\left( {1 + \frac{{\sum\nolimits_{k = 1}^K {{p_k}{{\left| {h_{d,k}^H + {\bf{q}}_k^H{{\bf{v}}_0}} \right|}^2}} }}{{{\sigma ^2}}}} \right).
\end{align}

Accordingly, the achievable offloading sum-rate of \textbf{Case 2} and \textbf{Case 3} for the NOMA-based UL offloading can be written as
\begin{align}\label{C8dev}
&R_{{\rm{off - case2}}}^{{\rm{NOMA}}} = B{\tau _1}{\log _2}\left( {1 \!+\! \frac{{\sum\nolimits_{k = 1}^K {{p_k}{{\left| {h_{d,k}^H \!+\! {\bf{q}}_k^H{{\bf{v}}_1}} \right|}^2}} }}{{{\sigma ^2}}}} \right),\\
&R_{{\rm{off - case3}}}^{{\rm{NOMA}}} = B\sum\limits_{i = 1}^K {{\tau _{1,i}}{{\log }_2}\left( {1 + \frac{{\sum\nolimits_{k = 1}^K {{p_k}{{\left| {h_{d,k}^H + {\bf{q}}_k^H{{\bf{v}}_{1,i}}} \right|}^2}} }}{{{\sigma ^2}}}} \right)} ,
\end{align}
respectively, where ${{\tau _{1,i}}}$ denotes the time duration, when ${{{\bf{v}}_{1,i}}}$ is used.

\subsection{Problem Formulation}
\vspace{-6pt}
In this paper, we aim for maximizing the total number of computed bits of our IRS-aided WP-MEC systems, by jointly optimizing the IRS BF vectors, the time allocation of WPT and offloading, the transmit power of each device, and the local CPU frequency at each device. Both TDMA and NOMA are considered for UL offloading leading to the following formulation:
\subsubsection{TDMA-based Offloading}
When the TDMA scheme is applied, \textbf{Case 1}, \textbf{Case 2}, and \textbf{Case 3} are considered for evaluating the impact of DIBF on the computation rate. The computation rate maximization problem of \textbf{Case 1} can be formulated as\footnote{{To facilitate us to compare the fundamental limits of the achievable computation rate for TDMA and NOMA-based offloading schemes, the quality of service (QoS) constraints of each device are not considered here. Note that the proposed algorithm can be applicable to the scenarios with QoS constraints. Please refer to Remark 5 in Section V for details.}}
\begin{subequations}
\begin{align}
\label{C9-a}({\mathop{\rm P}\nolimits} _{{\rm{TDMA}}}^{{\rm{case1}}}):\mathop {\max }\limits_{{\tau _0},{\tau _{1,k}},{p_k},{{\bf{v}}_0},{f_k}} \;\;&B\sum\limits_{k = 1}^K {\tau _{1,k}{{\log }_2}\left( {1 + \frac{{{p_k}{{\left| {h_{d,k}^H + {\bf{q}}_k^H{{\bf{v}}_0}} \right|}^2}}}{{{\sigma ^2}}}} \right)}  + \sum\limits_{k = 1}^K {\frac{{T{f_k}}}{C}}\\
\label{C9-b}{\rm{s.t.}}\;\;\;\;\;\;\;&\tau _{1,k}{p_k} + T{\gamma _c}f_k^3 \le {\tau _0}\eta {P_E}{\left| {h_{d,k}^H + {\bf{q}}_k^H{{\bf{v}}_0}} \right|^2}{\rm{ }},\;\;\forall k,\\
\label{C9-c}&{\tau _0} + \sum\limits_{k = 1}^K {\tau _{1,k}}  \le T,\\
\label{C9-d}&{\tau _0} \ge 0,{\rm{ }}\tau _{1,k} \ge 0,{\rm{ }}{p_k} \ge 0,{\rm{ }}{f_k} \ge 0{\rm{ }},\;\;\forall k,\\
\label{C9-e}&\left| {{{\left[ {{{\bf{v}}_0}} \right]}_n}} \right| = 1,{\rm{ }}n = 1, \ldots N.
\end{align}
\end{subequations}
In $\left( {{\mathop{\rm P}\nolimits} _{{\rm{TDMA}}}^{{\rm{case1}}}} \right)$,~\eqref{C9-b} represents the energy harvesting causality constraint that the total dissipated energy cannot be higher than the total harvested energy\footnote{{Since we consider that each channel coherence block consists of multiple frames, the consumed energy here comprises two parts. One part is used for UL offloading in the current frame, while the other part is for local computing throughout UL offloading in the current frame and DL WPT in the next frame. Note that constraint \eqref{C9-b} was also adopted in \cite{8234686, 8334188, 8434285}.}} \cite{8234686, 8334188, 8434285}. Furthermore, \eqref{C9-c} is the constraint on the time duration of the DL WPT and UL offloading, while \eqref{C9-d} contains the non-negativity constraints for the optimization variables and \eqref{C9-e} is the unit-modulus constraint for the IRS BF vector. For \textbf{Case 2} and \textbf{Case 3}, the corresponding computation rate maximization problems can be formulated, respectively, as:
\begin{subequations}
\begin{align}
\label{C10-a}({\mathop{\rm P}\nolimits} _{{\rm{TDMA}}}^{{\rm{case2}}}):\mathop {\max }\limits_{{\tau _0},{\tau _{1,k}},{p_k},{{\bf{v}}_0},{{\bf{v}}_1},{f_k}} \;\;&B\sum\limits_{k = 1}^K {\tau _{1,k}{{\log }_2}\left( {1 + \frac{{{p_k}{{\left| {h_{d,k}^H + {\bf{q}}_k^H{{\bf{v}}_1}} \right|}^2}}}{{{\sigma ^2}}}} \right)}  + \sum\limits_{k = 1}^K {\frac{{T{f_k}}}{C}}\\
\label{C10-bb}{\rm{s.t.}}\;\;\;\;\;\;\;\;\;&\left| {{{\left[ {{{\bf{v}}_1}} \right]}_n}} \right| = 1,{\rm{ }}n = 1, \ldots N, \\
\label{C10-c}&\eqref{C9-b}, ~\eqref{C9-c}, ~\eqref{C9-d}, ~\eqref{C9-e},
\end{align}
\end{subequations}

\begin{subequations}
\begin{align}
\label{C11-a}({\mathop{\rm P}\nolimits} _{{\rm{TDMA}}}^{{\rm{case3}}}):\mathop {\max }\limits_{{\tau _0},{\tau _{1,k}},{p_k},{{\bf{v}}_0},{{\bf{v}}_{1,k}},{f_k}} \;\;&B\sum\limits_{k = 1}^K {{\tau _{1,k}}{{\log }_2}\left( {1 + \frac{{{p_k}{{\left| {h_{d,k}^H + {\bf{q}}_k^H{{\bf{v}}_{1,k}}} \right|}^2}}}{{{\sigma ^2}}}} \right)}  + \sum\limits_{k = 1}^K {\frac{{T{f_k}}}{C}}\\
\label{C10-c}{\rm{s.t.}}\;\;\;\;\;\;\;\;\;\;&\left| {{{\left[ {{{\bf{v}}_{1,k}}} \right]}_n}} \right| = 1,n = 1, \ldots N, \;\;\forall k,\\
\label{C10-b}&\eqref{C9-b}, ~\eqref{C9-c}, ~\eqref{C9-d}, ~\eqref{C9-e}.
\end{align}
\end{subequations}
\subsubsection{NOMA-based Offloading}
When NOMA is applied for UL offloading, the corresponding computation rate maximization problems are formulated according to the aforementioned three cases, respectively, as follows:
\begin{subequations}
\begin{align}
\label{C12-a}({\mathop{\rm P}\nolimits} _{{\rm{NOMA}}}^{{\rm{case1}}}):\mathop {\max }\limits_{{\tau _0},{\tau _1},{p_k},{{\bf{v}}_0},{f_k}} \;\;&B{\tau _1}{\log _2}\left( {1 + \frac{{\sum\nolimits_{k = 1}^K {{p_k}{{\left| {h_{d,k}^H + {\bf{q}}_k^H{{\bf{v}}_0}} \right|}^2}} }}{{{\sigma ^2}}}} \right) + \sum\limits_{k = 1}^K {\frac{{T{f_k}}}{C}}\\
\label{C12-b}{\rm{s.t.}}\;\;\;\;\;\;&{\tau _1}{p_k} + T{\gamma _c}f_k^3 \le {\tau _0}\eta {P_E}{\left| {h_{d,k}^H + {\bf{q}}_k^H{{\bf{v}}_0}} \right|^2}{\rm{ }},\;\;\forall k,\\
\label{C12-c}&{\tau _0} + {\tau _1} \le T,\\
\label{C12-d}&{\tau _0} \ge 0,{\tau _1} \ge 0,{\rm{ }}{p_k} \ge 0,{\rm{ }}{f_k} \ge 0{\rm{ }},\;\;\forall k,\\
\label{C12-e}&\eqref{C9-e},
\end{align}
\end{subequations}
\begin{subequations}
\begin{align}
\label{C13-a}({\mathop{\rm P}\nolimits} _{{\rm{NOMA}}}^{{\rm{case2}}}):\mathop {\max }\limits_{{\tau _0},{\tau _1},{p_k},{{\bf{v}}_0},{{\bf{v}}_1},{f_k}} \;\;&B{\tau _1}{\log _2}\left( {1 + \frac{{\sum\nolimits_{k = 1}^K {{p_k}{{\left| {h_{d,k}^H + {\bf{q}}_k^H{{\bf{v}}_1}} \right|}^2}} }}{{{\sigma ^2}}}} \right) + \sum\limits_{k = 1}^K {\frac{{T{f_k}}}{C}}\\
\label{C13-b}{\rm{s.t.}}\;\;\;\;\;\;\;\;&\eqref{C9-e}, ~\eqref{C10-bb}, ~\eqref{C12-b}, ~\eqref{C12-c}, ~\eqref{C12-d},
\end{align}
\end{subequations}
\begin{subequations}
\begin{align}
\label{C100-a}({\mathop{\rm P}\nolimits} _{{\rm{NOMA}}}^{{\rm{case3}}}):\mathop {\max }\limits_{{\tau _0},{\tau _{1,i}},{p_k},{{\bf{v}}_0},{{\bf{v}}_{1,i}},{f_k}} \;&B\sum\limits_{i = 1}^K {{\tau _{1,i}}{{\log }_2}\left( {1 + \frac{{\sum\nolimits_{k = 1}^K {{p_k}{{\left| {h_{d,k}^H + {\bf{q}}_k^H{{\bf{v}}_{1,i}}} \right|}^2}} }}{{{\sigma ^2}}}} \right) + \sum\limits_{k = 1}^K {\frac{{T{f_k}}}{C}} }\\
\label{C100-b}{\rm{s.t.}}\;\;\;\;\;\;\;\;\;&\eqref{C9-c}, ~\eqref{C9-d}, ~\eqref{C9-e}, ~\eqref{C10-c}, ~\eqref{C12-b}.
\end{align}
\end{subequations}
\vspace{-30pt}
\section{TDMA or NOMA for UL Offloading?}
When multiple devices are activated for UL offloading, it still remains unknown which MA scheme is more efficient for UL offloading, especially when considering the impact of the IRS. To answer this question, the theoretical performance comparison between NOMA and TDMA-based UL offloading is provided in this section. First, the impact of DIBF on the computation rate of both NOMA and TDMA-based WP-MEC systems is analyzed. Then, we analytically compare the computation rate achieved by NOMA and TDMA schemes for \textbf{Case 1, 2 and 3}.
\vspace{-8pt}
\subsection{Impact of DIBF on NOMA and TDMA}
For notational simplicity, we use $R_{{\rm{TDMA}}}^{{\rm{case - }}m}$ and $R_{{\rm{NOMA}}}^{{\rm{case }}-m}$ to denote the sum computation rate for \textbf{Case} $m, \left( {m = 1,2,3} \right)$ of TDMA and  NOMA at the optimal solution, respectively. To shed light on the impact of DIBF on the computation rate of the NOMA and TDMA schemes, we first introduce the following lemmas.
\begin{lem}\label{lem}
For IRS-aided WP-MEC systems employing NOMA for offloading, it follows that $R_{{\rm{NOMA}}}^{{\rm{case1}}} \le R_{{\rm{NOMA}}}^{{\rm{case2}}} = R_{{\rm{NOMA}}}^{{\rm{case3}}}$.
\end{lem}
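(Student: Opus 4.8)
The statement is the chain $R_{{\rm{NOMA}}}^{{\rm{case1}}} \le R_{{\rm{NOMA}}}^{{\rm{case2}}} = R_{{\rm{NOMA}}}^{{\rm{case3}}}$, so the plan is first to establish the monotone chain $R_{{\rm{NOMA}}}^{{\rm{case1}}} \le R_{{\rm{NOMA}}}^{{\rm{case2}}} \le R_{{\rm{NOMA}}}^{{\rm{case3}}}$ by a feasible-set nesting argument, and then to prove the reverse inequality $R_{{\rm{NOMA}}}^{{\rm{case3}}} \le R_{{\rm{NOMA}}}^{{\rm{case2}}}$, which upgrades the last inequality to an equality. Only the reverse inequality is non-trivial.

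For the nesting: $({\rm P}_{{\rm NOMA}}^{{\rm case1}})$ is exactly $({\rm P}_{{\rm NOMA}}^{{\rm case2}})$ with the extra restriction ${\bf{v}}_1={\bf{v}}_0$, so every Case-1-feasible point is Case-2-feasible with an identical objective, giving $R_{{\rm{NOMA}}}^{{\rm{case1}}} \le R_{{\rm{NOMA}}}^{{\rm{case2}}}$. Similarly, from any Case-2-feasible point $(\tau_0,\tau_1,p_k,{\bf{v}}_0,{\bf{v}}_1,f_k)$, I would choose any nonnegative split $\{\tau_{1,i}\}_{i=1}^{K}$ with $\sum_{i}\tau_{1,i}=\tau_1$ and set ${\bf{v}}_{1,i}={\bf{v}}_1$ for all $i$; the Case-3 offloading sum-rate then collapses to the Case-2 value, and constraints \eqref{C12-b} (with $\tau_1=\sum_i\tau_{1,i}$), \eqref{C9-c}, \eqref{C9-e} and the per-sub-slot unit-modulus constraints are all met, giving $R_{{\rm{NOMA}}}^{{\rm{case2}}} \le R_{{\rm{NOMA}}}^{{\rm{case3}}}$.

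For the reverse inequality, the key observation is that, with the transmit powers $\{p_k\}$ held fixed (they are common to all $K$ sub-slots in the Case-3 formulation), the NOMA offloading rate in sub-slot $i$ depends on the reflection vector only through the scalar $\rho({\bf{v}}):=\sum_{k=1}^{K}p_k\,|h_{d,k}^{H}+{\bf{q}}_k^{H}{\bf{v}}|^{2}$, and $t\mapsto\log_2(1+t/\sigma^2)$ is increasing. Letting ${\bf{v}}^{\star}$ be any maximizer of $\rho(\cdot)$ over $\{{\bf{v}}:|[{\bf{v}}]_n|=1,\ \forall n\}$, we have $\rho({\bf{v}}_{1,i})\le\rho({\bf{v}}^{\star})$ for every $i$, hence
\begin{align}
B\sum_{i=1}^{K}\tau_{1,i}\log_2\!\left(1+\frac{\rho({\bf{v}}_{1,i})}{\sigma^2}\right)\le B\left(\sum_{i=1}^{K}\tau_{1,i}\right)\log_2\!\left(1+\frac{\rho({\bf{v}}^{\star})}{\sigma^2}\right).
\end{align}
Starting from a Case-3 optimizer, I would build the Case-2 point that keeps $\tau_0,{\bf{v}}_0,\{p_k\},\{f_k\}$ and sets $\tau_1:=\sum_i\tau_{1,i}$, ${\bf{v}}_1:={\bf{v}}^{\star}$. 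This point is feasible for $({\rm P}_{{\rm NOMA}}^{{\rm case2}})$: the EH causality constraints \eqref{C12-b} are untouched, since they involve the offloading stage only through $\sum_i\tau_{1,i}$ and not the offloading BF vector; the budget $\tau_0+\tau_1\le T$ is \eqref{C12-c}; ${\bf{v}}_0$ still obeys \eqref{C9-e} and ${\bf{v}}^{\star}$ obeys \eqref{C10-bb}. The local-computing term $\sum_k Tf_k/C$ is unchanged and, by the displayed inequality, the offloading term does not decrease, so the Case-2 objective at this point is at least $R_{{\rm{NOMA}}}^{{\rm{case3}}}$, i.e.\ $R_{{\rm{NOMA}}}^{{\rm{case2}}}\ge R_{{\rm{NOMA}}}^{{\rm{case3}}}$. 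Combined with the nesting chain this yields $R_{{\rm{NOMA}}}^{{\rm{case1}}}\le R_{{\rm{NOMA}}}^{{\rm{case2}}}=R_{{\rm{NOMA}}}^{{\rm{case3}}}$.

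The main --- essentially the only --- conceptual obstacle is to recognize that the NOMA sum-rate funnels the entire IRS-BF dependence into the single scalar $\rho({\bf{v}})$, whose maximizer is fixed once the powers are fixed, so that time-multiplexing distinct reflection patterns during UL offloading cannot help NOMA; this is in sharp contrast to TDMA, where each device occupies a private slot and prefers its own BF vector matched to $|h_{d,k}^{H}+{\bf{q}}_k^{H}{\bf{v}}|^{2}$, so the Case-3 freedom is genuinely useful there. One minor technical point worth flagging in the write-up is the reliance on a common per-device transmit power across the Case-3 sub-slots, which makes the slot-merging clean; if per-sub-slot powers $p_{k,i}$ were permitted, the same conclusion would follow from a Jensen-type argument exploiting concavity of $\log_2(1+\cdot)$ with the normalized durations $\tau_{1,i}/\tau_1$ as weights.
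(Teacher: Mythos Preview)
Your proof is correct and follows essentially the same approach as the paper: a feasible-set nesting argument for the chain $R_{{\rm NOMA}}^{{\rm case1}}\le R_{{\rm NOMA}}^{{\rm case2}}\le R_{{\rm NOMA}}^{{\rm case3}}$, and then the reverse inequality by replacing all sub-slot BF vectors with a single one that maximizes $\rho({\bf v})=\sum_k p_k^{*}|h_{d,k}^{H}+{\bf q}_k^{H}{\bf v}|^{2}$. The only minor variation is that the paper picks this single vector as the best \emph{among the finitely many} ${\bf v}_{1,i}^{*}$ appearing in the Case-3 optimizer, whereas you take a global maximizer of $\rho(\cdot)$ over the whole unit-modulus set; both choices work, with the paper's avoiding any compactness/continuity appeal and yours making the ``one vector suffices'' intuition slightly more explicit.
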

\begin{proof}
Assume that an optimal solution of $({\mathop{\rm P}\nolimits} _{{\rm{NOMA}}}^{{\rm{case3}}})$ is given by $\left\{ {\tau _0^*,\tau _{1,i}^*,p_k^*,f_k^*,{\bf{v}}_0^*,{\bf{v}}_{1,i}^*} \right\}$. Then, the optimal value of $({\mathop{\rm P}\nolimits} _{{\rm{NOMA}}}^{{\rm{case3}}})$ can be expressed as
\begin{align}\label{optimal_value_noma3}
R_{{\rm{NOMA}}}^{{\rm{case3}}} = \sum\limits_{i = 1}^K {\tau _{1,i}^*{{\log }_2}\left( {1 + \frac{{\sum\nolimits_{k = 1}^K {p_k^*{{\left| {h_{d,k}^H + {\bf{q}}_k^H{\bf{v}}_{1,i}^*} \right|}^2}} }}{{{\sigma ^2}}}} \right) + \sum\limits_{k = 1}^K {\frac{{Tf_k^*}}{C}} }.
\end{align}
There always exists an IRS BF vector denoted by ${\bf{v}}_{1,p}^*$, $p \in \left\{ {1, \cdots ,K} \right\}$, which satisfies ${\bf{v}}_{1,p}^* = \mathop {\arg \max }\limits_{{\bf{v}}_{1,i}^*,i \in \left\{ {1, \cdots ,K} \right\}} \sum\limits_{k = 1}^K {p_k^*{{\left| {h_{d,k}^H + {\bf{q}}_k^H{\bf{v}}_{1,i}^*} \right|}^2}} $. As such, we have
\begin{align}\label{noma3_vs_noma2}
R_{{\rm{NOMA}}}^{{\rm{case3}}} & \le \sum\limits_{i = 1}^K {\tau _{1,i}^*{{\log }_2}\left( {1 + \frac{{\sum\nolimits_{k = 1}^K {p_k^*{{\left| {h_{d,k}^H + {\bf{q}}_k^H{\bf{v}}_{1,p}^*} \right|}^2}} }}{{{\sigma ^2}}}} \right) + \sum\limits_{k = 1}^K {\frac{{Tf_k^*}}{C}} } \nonumber\\
& = {{\tilde \tau }_1}{\log _2}\left( {1 + \frac{{\sum\nolimits_{k = 1}^K {p_k^*{{\left| {h_{d,k}^H + {\bf{q}}_k^H{\bf{v}}_{1,p}^*} \right|}^2}} }}{{{\sigma ^2}}}} \right) + \sum\limits_{k = 1}^K {\frac{{Tf_k^*}}{C}}  \le R_{{\rm{NOMA}}}^{{\rm{case2}}},
\end{align}
where ${{\tilde \tau }_1} = \sum\limits_{i = 1}^K {\tau _{1,i}^*} $. The equality holds if ${\bf{v}}_{1,i}^* = {\bf{v}}_{1,p}^*,\forall i$. Meanwhile, by setting ${{\bf{v}}_{1,i}} = {{\bf{v}}_{1,j}},\forall i,j$, problem $({\mathop{\rm P}\nolimits} _{{\rm{NOMA}}}^{{\rm{case3}}})$ is reduced to $({\mathop{\rm P}\nolimits} _{{\rm{NOMA}}}^{{\rm{case2}}})$, which yields $R_{{\rm{NOMA}}}^{{\rm{case2}}} \le R_{{\rm{NOMA}}}^{{\rm{case3}}}$. Thus, we have $R_{{\rm{NOMA}}}^{{\rm{case2}}} = R_{{\rm{NOMA}}}^{{\rm{case3}}}$. Similarly, problem $({\mathop{\rm P}\nolimits} _{{\rm{NOMA}}}^{{\rm{case2}}})$ is reduced to $({\mathop{\rm P}\nolimits} _{{\rm{NOMA}}}^{{\rm{case1}}})$ by setting ${{\bf{v}}_1} = {{\bf{v}}_0}$, thus we have $R_{{\rm{NOMA}}}^{{\rm{case1}}} \le R_{{\rm{NOMA}}}^{{\rm{case2}}}$.
\end{proof}

\begin{lem}\label{lem}
For IRS-aided WP-MEC systems employing TDMA for offloading, it follows that $R_{{\rm{TDMA}}}^{{\rm{case1}}} \le R_{{\rm{TDMA}}}^{{\rm{case2}}} \le R_{{\rm{TDMA}}}^{{\rm{case3}}}$.
\end{lem}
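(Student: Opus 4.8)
The plan is to establish the chain $R_{\rm TDMA}^{\rm case1} \le R_{\rm TDMA}^{\rm case2} \le R_{\rm TDMA}^{\rm case3}$ by a nesting argument on the feasible sets, entirely analogous to the one used for NOMA in Lemma~1, exploiting the fact that each of the three TDMA problems is a relaxation of the previous one once the appropriate IRS BF vectors are forced to coincide.

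\begin{proof}
First I would show $R_{\rm TDMA}^{\rm case2} \le R_{\rm TDMA}^{\rm case3}$. Observe that if we impose the additional constraint ${\bf v}_{1,k} = {\bf v}_{1,j}$ for all $k,j$ in problem $({\rm P}_{\rm TDMA}^{\rm case3})$, then the common value is exactly the single vector ${\bf v}_1$ of $({\rm P}_{\rm TDMA}^{\rm case2})$, and the objective~\eqref{C11-a} together with the constraints reduce precisely to those of $({\rm P}_{\rm TDMA}^{\rm case2})$. Hence $({\rm P}_{\rm TDMA}^{\rm case2})$ is a restriction of $({\rm P}_{\rm TDMA}^{\rm case3})$ (its feasible set embeds into that of Case~3 with the same objective value), which immediately gives $R_{\rm TDMA}^{\rm case2} \le R_{\rm TDMA}^{\rm case3}$. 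The same device yields $R_{\rm TDMA}^{\rm case1} \le R_{\rm TDMA}^{\rm case2}$: setting ${\bf v}_1 = {\bf v}_0$ in $({\rm P}_{\rm TDMA}^{\rm case2})$ recovers $({\rm P}_{\rm TDMA}^{\rm case1})$ exactly, so any feasible point of Case~1 is feasible for Case~2 with identical objective. Chaining the two inequalities completes the argument.

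Unlike Lemma~1, no reverse inequality is needed here, so the proof is genuinely just the inclusion-of-feasible-sets direction and does not require the ``pick the best of the $K$ vectors'' trick; the subtlety in the NOMA case was that, because all devices transmit simultaneously over the whole of $\tau_1$, collapsing the $K$ vectors to the single best one did \emph{not} lose optimality, which forced equality. For TDMA, since device $k$ is active only in its own slot $\tau_{1,k}$, it can in principle use a vector ${\bf v}_{1,k}$ tailored to maximize $|h_{d,k}^H + {\bf q}_k^H {\bf v}_{1,k}|^2$ independently of the other devices, so there is no reason to expect the reverse inequality, and indeed the strict gain is precisely what is exploited later to separate TDMA from NOMA in Case~3.

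The only point requiring a little care — and the closest thing to an obstacle — is verifying that the reduction is genuinely a \emph{restriction} and not merely a ``similar-looking'' problem: one must check that the energy-harvesting causality constraint~\eqref{C9-b}, the total-time constraint~\eqref{C9-c}, the non-negativity constraints~\eqref{C9-d}, and the unit-modulus constraints on every IRS vector are all preserved verbatim under the substitutions ${\bf v}_{1,k}\equiv{\bf v}_1$ and ${\bf v}_1={\bf v}_0$. Since~\eqref{C9-b}--\eqref{C9-e} appear identically (via the ``$\eqref{C9-b},\dots$'' inclusions) in all three TDMA formulations and the substituted vectors automatically inherit the unit-modulus property, this check is routine, and the lemma follows.
\end{proof}
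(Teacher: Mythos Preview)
Your proposal is correct and follows essentially the same inclusion-of-feasible-sets argument as the paper. The only minor difference is that for the inequality $R_{\rm TDMA}^{\rm case2}\le R_{\rm TDMA}^{\rm case3}$ the paper, instead of restricting the Case~3 vectors to coincide, explicitly invokes the phase-aligned optimal choice of ${\bf v}_{1,k}$ in Case~3 to argue ${\left| h_{d,k}^H + {\bf q}_k^H{\bf v}_{1,k}^* \right|^2}\ge {\left| h_{d,k}^H + {\bf q}_k^H{\bf v}_1^* \right|^2}$ for every $k$; this is a slightly more constructive variant of the same monotonicity observation you make.
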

\begin{proof}
By setting ${{\bf{v}}_1} = {{\bf{v}}_0}$, problem $({\mathop{\rm P}\nolimits} _{{\rm{TDMA}}}^{{\rm{case2}}})$ is reduced to $({\mathop{\rm P}\nolimits} _{{\rm{TDMA}}}^{{\rm{case1}}})$, which yields $R_{{\rm{TDMA}}}^{{\rm{case1}}} \le R_{{\rm{TDMA}}}^{{\rm{case2}}}$. For \textbf{Case 3}, the equivalent channel power gain of each device can be maximized by setting ${\bf{v}}_{1,k}$ to align the cascaded link with the direct link ${h_{d,k}^H}$. Thus, ${\left| {h_{d,k}^H + {\bf{q}}_k^H{\bf{v}}_{1,k}^*} \right|^2} \ge {\left| {h_{d,k}^H + {\bf{q}}_k^H{\bf{v}}_1^*} \right|^2}$ holds for device $k,\forall k$, which yields $R_{{\rm{TDMA}}}^{{\rm{case2}}} \le R_{{\rm{TDMA}}}^{{\rm{case3}}}$.
\end{proof}

Lemma 1 and Lemma 2 provide the following insights and also serve as the theoretical foundation for comparing TDMA and NOMA-based offloading, which will be discussed later.
\begin{itemize}
  \item For NOMA-based UL offloading, varying the IRS BF vectors in the UL offloading stage does not necessarily attain performance improvements over a static IRS BF vector. By contrast, for TDMA-based UL offloading, the computation rate can be further improved by varying IRS BF vectors for UL offloading.
  \item For both TDMA and NOMA-based WP-MEC systems, having different IRS BF vectors for the DL WPT and UL offloading generally outperforms its counterpart using the same IRS BF vector throughout the entire frame.
\end{itemize}

\vspace{-8pt}
\subsection{TDMA versus NOMA-based UL Offloading}

To compare the achievable computation rate performance between offloading using TDMA and  NOMA, the relationship between $({\mathop{\rm P}\nolimits} _{{\rm{TDMA}}}^{{\rm{case2}}})$ and $({\mathop{\rm P}\nolimits} _{{\rm{NOMA}}}^{{\rm{case2}}})$ is presented in the following theorem.
\begin{thm}\label{thm2}
Assuming that $\left\{ {\tau _0^*,\tau _{1,k}^*,p_k^*,{\bf{v}}_0^*,{\bf{v}}_1^*,f_k^*} \right\}$ and $\left\{ {\tau _0^ \star ,\tau _1^ \star ,p_k^ \star ,{\bf{v}}_0^ \star ,{\bf{v}}_1^ \star ,f_k^ \star } \right\}$ are the optimal solutions of $({\mathop{\rm P}\nolimits} _{{\rm{TDMA}}}^{{\rm{case2}}})$ and $({\mathop{\rm P}\nolimits} _{{\rm{NOMA}}}^{{\rm{case2}}})$, respectively, we have $R_{{\rm{TDMA}}}^{{\rm{case2}}} = R_{{\rm{NOMA}}}^{{\rm{case2}}}$ with $\tau _0^ \star  = \tau _0^*,\tau _1^ \star  = \sum\nolimits_{k = 1}^K {\tau _{1,k}^*} ,p_k^ \star  = p_k^*,{\bf{v}}_0^ \star  = {\bf{v}}_0^*,{\bf{v}}_1^ \star  = {\bf{v}}_1^*$ and $f_k^ \star  = f_k^*$.
\end{thm}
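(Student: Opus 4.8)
The plan is to prove the two inequalities $R_{\mathrm{TDMA}}^{\mathrm{case2}}\le R_{\mathrm{NOMA}}^{\mathrm{case2}}$ and $R_{\mathrm{NOMA}}^{\mathrm{case2}}\le R_{\mathrm{TDMA}}^{\mathrm{case2}}$ separately, in each case by constructing from an optimal solution of one problem a feasible solution of the other with an objective value that is at least as large. The single structural fact that makes this work is that the energy-causality constraints \eqref{C9-b} and \eqref{C12-b} bound the \emph{product} of transmit power and transmit duration of each device, $\tau_{1,k}p_k$ respectively $\tau_1 p_k$, rather than the instantaneous power; everything else in the two problems (the unit-modulus constraints \eqref{C9-e}, \eqref{C10-bb}, the time budget, the local-computing term $\sum_k Tf_k/C$) is identical once $\tau_0$, $\mathbf{v}_0$, $\mathbf{v}_1$ and $f_k$ are kept fixed, which I will do throughout. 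In particular the harvested energies $E_k=\tau_0\eta P_E|h_{d,k}^H+\mathbf{q}_k^H\mathbf{v}_0|^2$ and the local-computing bits are then unchanged, so only the offloading time(s) and the transmit powers need to be re-allocated.

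For $R_{\mathrm{TDMA}}^{\mathrm{case2}}\le R_{\mathrm{NOMA}}^{\mathrm{case2}}$: take an optimal solution $\{\tau_0^*,\tau_{1,k}^*,p_k^*,\mathbf{v}_0^*,\mathbf{v}_1^*,f_k^*\}$ of $(\mathrm{P}_{\mathrm{TDMA}}^{\mathrm{case2}})$, merge the slots into $\tau_1=\sum_{k}\tau_{1,k}^*$, and scale the powers down to $p_k=\tau_{1,k}^*p_k^*/\tau_1$ (keeping $\tau_0,\mathbf{v}_0,\mathbf{v}_1,f_k$). Then $\tau_1 p_k=\tau_{1,k}^*p_k^*$, so \eqref{C12-b} collapses to \eqref{C9-b}, and $\tau_0+\tau_1\le T$; hence the point is feasible for $(\mathrm{P}_{\mathrm{NOMA}}^{\mathrm{case2}})$. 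For the objective, the NOMA offloading rate equals $B\tau_1\log_2\big(1+\sum_k(\tau_{1,k}^*/\tau_1)\,p_k^*|h_{d,k}^H+\mathbf{q}_k^H\mathbf{v}_1^*|^2/\sigma^2\big)$, and since the weights $\tau_{1,k}^*/\tau_1$ sum to one, Jensen's inequality applied to the concave map $x\mapsto\log_2(1+x)$ lower-bounds it by $B\sum_k\tau_{1,k}^*\log_2(1+p_k^*|h_{d,k}^H+\mathbf{q}_k^H\mathbf{v}_1^*|^2/\sigma^2)$, i.e. by the TDMA offloading rate; adding the common $\sum_k Tf_k^*/C$ gives the claimed inequality (the case $\tau_1=0$ being trivial).

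For the reverse inequality, take an optimal solution $\{\tau_0^\star,\tau_1^\star,p_k^\star,\mathbf{v}_0^\star,\mathbf{v}_1^\star,f_k^\star\}$ of $(\mathrm{P}_{\mathrm{NOMA}}^{\mathrm{case2}})$, set $s_k=p_k^\star|h_{d,k}^H+\mathbf{q}_k^H\mathbf{v}_1^\star|^2/\sigma^2$ and $S=\sum_k s_k$ (if $S=0$ the statement is trivial), and split the offloading time \emph{proportionally to the received SNRs}, $\tau_{1,k}=(s_k/S)\tau_1^\star$, with $p_k=p_k^\star\tau_1^\star/\tau_{1,k}$ whenever $s_k>0$ and $\tau_{1,k}=p_k=0$ otherwise. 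Again $\tau_{1,k}p_k=\tau_1^\star p_k^\star$, so \eqref{C9-b} coincides with \eqref{C12-b}, and $\tau_0+\sum_k\tau_{1,k}=\tau_0^\star+\tau_1^\star\le T$; feasibility for $(\mathrm{P}_{\mathrm{TDMA}}^{\mathrm{case2}})$ follows. The point of the proportional split is that it makes every active device see the \emph{aggregate} SNR, $p_k|h_{d,k}^H+\mathbf{q}_k^H\mathbf{v}_1^\star|^2/\sigma^2=(\tau_1^\star/\tau_{1,k})s_k=S$, so the TDMA offloading rate becomes $B\sum_{k:s_k>0}\tau_{1,k}\log_2(1+S)=B\tau_1^\star\log_2(1+S)$, which is exactly the NOMA offloading rate; the local-computing terms agree. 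Combining the two inequalities yields $R_{\mathrm{TDMA}}^{\mathrm{case2}}=R_{\mathrm{NOMA}}^{\mathrm{case2}}$, and the first construction exhibits an optimal NOMA solution obtained from an optimal TDMA solution by fusing $\{\tau_{1,k}^*\}$ into $\tau_1=\sum_k\tau_{1,k}^*$ (with the attendant power re-allocation) while leaving $\tau_0,\mathbf{v}_0,\mathbf{v}_1,f_k$ unchanged, matching the correspondence in the statement.

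The step I expect to need the most care is the reverse direction, because on a generic Gaussian multiple-access channel TDMA is strictly suboptimal; here it is not, and the reason is exactly the one highlighted above — the device budgets constrain energy $\tau_{1,k}p_k$, not power, so the proportional time split can trade duration against power to make each user's effective SNR equal the sum SNR. Beyond that the remaining work is routine bookkeeping: handling the corner cases ($\tau_1=0$, or some $s_k=0$ so that the convention $0\cdot\log_2(1+\,\cdot\,)=0$ is invoked), and verifying that no constraint other than \eqref{C9-b}/\eqref{C12-b} and the time budget couples variables across the two problems, since \eqref{C9-d}/\eqref{C12-d} are mere nonnegativity and \eqref{C9-e}/\eqref{C10-bb} only touch $\mathbf{v}_0,\mathbf{v}_1$, which are held fixed.
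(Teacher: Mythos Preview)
Your proposal is correct. The reverse direction ($R_{\mathrm{NOMA}}^{\mathrm{case2}}\le R_{\mathrm{TDMA}}^{\mathrm{case2}}$) is exactly the paper's argument: both split the NOMA slot proportionally to the per-device received SNRs so that every active user sees the aggregate SNR $S$, collapsing the TDMA sum into the single NOMA log. The forward direction, however, is genuinely different. The paper fixes $\mathbf v_0^*,\mathbf v_1^*,f_k^*$ and analyses the KKT conditions of the residual convex problem in $(\tau_0,\{\tau_{1,k}\})$ to show that at the TDMA optimum all devices share the \emph{same} received SNR $\Gamma^*$; this makes the TDMA sum collapse \emph{identically} into $B\tau_1^*\log_2(1+\Gamma^*)$, which is then recognised as a feasible NOMA objective value. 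You bypass the KKT analysis entirely by applying Jensen to $x\mapsto\log_2(1+x)$ with weights $\tau_{1,k}^*/\tau_1$, which is shorter and self-contained. What the paper's route buys is the structural fact that the TDMA optimum equalises SNRs (so your Jensen step is in fact tight there); what yours buys is that no optimality structure of the TDMA solution is needed at all. One small consequence worth noting: the theorem also asserts $p_k^\star=p_k^*$, whereas your construction produces the rescaled powers $p_k=\tau_{1,k}^*p_k^*/\tau_1$; in fact these rescaled powers are what make the NOMA energy constraint \eqref{C12-b} hold, and the literal identity $p_k^\star=p_k^*$ does not follow from either argument when more than one device offloads---so on this point your write-up is the more careful one.
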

\begin{proof}
See Appendix A.
\end{proof}

Note that the similar results presented in Theorem 1 can be directly extended to capture the  interrelation between $({\mathop{\rm P}\nolimits} _{{\rm{TDMA}}}^{{\rm{case1}}})$ and $({\mathop{\rm P}\nolimits} _{{\rm{NOMA}}}^{{\rm{case1}}})$, i.e., $R_{{\rm{TDMA}}}^{{\rm{case1}}} = R_{{\rm{NOMA}}}^{{\rm{case1}}}$. Theorem 1 explicitly shows that the solutions of problem $({\mathop{\rm P}\nolimits} _{{\rm{NOMA}}}^{{\rm{case1}}})$ and $({\mathop{\rm P}\nolimits} _{{\rm{NOMA}}}^{{\rm{case2}}})$ can be directly obtained based on those of $({\mathop{\rm P}\nolimits} _{{\rm{TDMA}}}^{{\rm{case1}}})$ and $({\mathop{\rm P}\nolimits} _{{\rm{TDMA}}}^{{\rm{case2}}})$, respectively.
\begin{rem}
The results presented in Lemma 1, Lemma 2, and Theorem 1 answer the fundamental question regarding the computation rate comparison between offloading using TDMA and NOMA. Specifically, the comparison outcome depends on which DIBF scheme is applied. For \textbf{Case 1} and \textbf{Case 2}, it is shown that the same computation rate can be achieved by using TDMA and NOMA for offloading. Since the computation rate of TDMA can be further improved by adapting IRS BF vectors over different TSs in the UL offloading stage, the computation rate of TDMA becomes higher than that of NOMA for \textbf{Case 3} at the cost of extra signaling overhead. As such, we have the inequality chain as follows:
\begin{align}\label{noma_verus_tdma}
R_{{\rm{TDMA}}}^{{\rm{case1}}} = R_{{\rm{NOMA}}}^{{\rm{case1}}} \le R_{{\rm{TDMA}}}^{{\rm{case2}}} = R_{{\rm{NOMA}}}^{{\rm{case2}}} = R_{{\rm{NOMA}}}^{{\rm{case3}}} \le R_{{\rm{TDMA}}}^{{\rm{case3}}}.
\end{align}
\end{rem}
\begin{rem}
Considering the high $C$ regime, i.e., $C \to  + \infty$, which implies that the device has nearly no computing capability to deal with computationally intensive tasks, the computations completely rely on offloading the tasks to MEC servers. In this case, the computation rate maximization problem is equivalent to the throughput maximization problem of WPCNs. For \textbf{Case 2}, our previous work \cite{9400380} unveiled that the same IRS BF vector can be exploited for the DL and UL in WPCNs without loss of optimality, i.e., ${\bf{v}}_0^* = {\bf{v}}_1^*$. Based on the results provided in Theorem 1, we have the following relationship in the high $C$ regime:
\begin{align}\label{high_C_regime}
R_{{\rm{TDMA}}}^{{\rm{case1}}} = R_{{\rm{NOMA}}}^{{\rm{case1}}} = R_{{\rm{TDMA}}}^{{\rm{case2}}} = R_{{\rm{NOMA}}}^{{\rm{case2}}} = R_{{\rm{NOMA}}}^{{\rm{case3}}} \le R_{{\rm{TDMA}}}^{{\rm{case3}}}.
\end{align}
\end{rem}
In contrast to \eqref{noma_verus_tdma}, \eqref{high_C_regime} suggests that for \textbf{Case 2}, DL WPT and UL offloading can adopt the same IRS BF vector without loss of optimality at a lower signaling overhead.
\begin{rem}
Note that the theoretical comparison provided in Remark 2 can be directly extended to a more general non-linear EH model. For a general EH model, the output direct current power can be generally expressed as a function of the input RF power, i.e, ${Q_k}\left( {{P_E}{{\left| {h_{d,k}^H + {\bf{q}}_k^H{\bf{v}}_0^*} \right|}^2}} \right)$. Replacing $\eta {P_E}{\left| {h_{d,k}^H + {\bf{q}}_k^H{\bf{v}}_0^*} \right|^2}$ by ${Q_k}\left( {{P_E}{{\left| {h_{d,k}^H + {\bf{q}}_k^H{\bf{v}}_0^*} \right|}^2}} \right)$ in Appendix A, the results can be directly obtained through similar steps.
\end{rem}
\vspace{-12pt}
\section{UL Offloading activation Condition in single-user Systems}
\vspace{-4pt}
Before deriving the solutions of the aforementioned computation rate maximization problems, we consider the special case of a single-user setup, i.e., $K = 1$, to gain important insights into the efficiency of IRSs for UL offloading activation. In this case, the MA schemes have no impact on the results, and thus the computation rate maximization problems are simplified to:
\begin{subequations}\label{SU}
\begin{align}
\mathop {\max }\limits_{{\tau _0},{\tau _1},p,{\bf{v}},f} \;\;&B{\tau _1}{\log _2}\left( {1 + \frac{{p{{\left| {h_d^H + {{\bf{q}}^H}{\bf{v}}} \right|}^2}}}{{{\sigma ^2}}}} \right) + \frac{{Tf}}{C}\\
\label{SU-b}{\rm{s.t.}}\;\;\;\;\;&{\tau _1}p + T{\gamma _c}{f^3} \le {\tau _0}\eta {P_E}{\left| {h_d^H + {{\bf{q}}^H}{\bf{v}}} \right|^2},\\
\label{SU-c}&{\tau _0} + {\tau _1} \le T,\\
\label{SU-d}&{\tau _0} \ge 0,{\tau _1} \ge 0,p \ge 0,f \ge 0,\\
\label{SU-e}&\left| {{{\left[ {\bf{v}} \right]}_n}} \right| = 1,n = 1, \ldots N.
\end{align}
\end{subequations}
Problem \eqref{SU} has not been investigated in previous articles to the best of our knowledge. Note that for a WP-MEC system, UL offloading may not be activated, when suffering from severe wireless channel conditions. Hence, we focus our attention on a single-user case to unveil the impact of IRSs on the UL offloading activation condition. For problem \eqref{SU}, the optimal IRS BF vector ${\bf{v}}$ can be directly obtained as ${\left[ {{{\bf{v}}^*}} \right]_n} = {e^{j\left( {\left( {\arg \left( {h_d^H} \right) + \arg \left( {{{\left[ {\bf{q}} \right]}_n}} \right)} \right)} \right)}}$, which aligns the cascaded channel between a typical device and the HAP via the IRS with the end-to-end channel. By setting ${\bf{v}}$ as the optimal form, the channel power gain between a typical device and HAP is determined. Let $h = {\left| {h_d^H + {{\bf{q}}^H}{\bf{v}^*}} \right|^2}$ for notational simplicity. Then, problem \eqref{SU} can be further transformed into a resource allocation optimization problem (OP) as follows
\begin{subequations}\label{SURA}
\begin{align}
\label{SURA-a}\mathop {\max }\limits_{{\tau _0},{\tau _1},e,f} \;\;&B{\tau _1}{\log _2}\left( {1 + \frac{{e{h^2}}}{{{\tau _1}{\sigma ^2}}}} \right) + \frac{{Tf}}{C}\\
\label{SURA-b}{\rm{s.t.}}\;\;\;&e + T{\gamma _c}{f^3} \le {\tau _0}\eta {P_E}h, \\
\label{SURA-c}&{\tau _0} \ge 0,{\tau _1} \ge 0,e \ge 0,f \ge 0, \\
\label{SURA-d}&\eqref{SU-c},
\end{align}
\end{subequations}
where $e = {\tau _1}p$. It may be readily shown that problem \eqref{SURA} is a convex OP. By analyzing the KKT conditions of problem \eqref{SURA}, the general UL offloading activation condition admitting a threshold-based structure is obtained in the following proposition.
\begin{proposition}
For the single-user setup, UL offloading for the typical device will be activated if and only if the following condition is satisfied,
\begin{align}\label{C30dev}
{P_E} > {\rm{thre}}\left( h \right) = {\gamma _c}\frac{1}{{\eta h}}{\left( {\frac{{\left( {{\sigma ^2} + {p^*}h} \right)\ln 2}}{{3Ch{\gamma _c}B}}} \right)^{\frac{3}{2}}},
\end{align}
where ${p^*}$ is the unique solution of
\begin{align}\label{C31dev}
G\left( {p,h} \right) \buildrel \Delta \over = {\log _2}\left( {1 + \frac{{ph}}{{{\sigma ^2}}}} \right) - \frac{{ph}}{{\left( {{\sigma ^2} + ph} \right)\ln 2}} - \eta {P_E}\frac{{{h^2}}}{{\left( {{\sigma ^2} + ph} \right)\ln 2}} = 0.
\end{align}
\end{proposition}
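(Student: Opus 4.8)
The plan is to exploit the convexity of \eqref{SURA} --- for which the KKT conditions are necessary and sufficient for global optimality --- and to read off the activation threshold from the stationarity conditions. I would attach multipliers $\lambda\ge 0$ to the energy-causality constraint \eqref{SURA-b}, $\mu\ge 0$ to the time constraint \eqref{SU-c}, and $\nu_0,\nu_1,\nu_e,\nu_f\ge 0$ to the four non-negativity constraints in \eqref{SURA-c}, and form the Lagrangian.

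Before any case split I would establish facts that hold at every optimal solution. The stationarity condition in $f$, $T/C-3\lambda T\gamma_c f^2+\nu_f=0$, forces $\lambda>0$ (else $T/C+\nu_f=0$, impossible), hence the energy constraint \eqref{SURA-b} is active, $f>0$, $\nu_f=0$, and $f=(3\lambda C\gamma_c)^{-1/2}$. Also $\tau_0>0$ at the optimum (otherwise the objective is zero, which is clearly beaten since $P_E,h>0$), so $\nu_0=0$, and the stationarity condition in $\tau_0$ gives $\mu=\lambda\eta P_E h>0$; hence the time constraint \eqref{SU-c} is active, $\tau_0+\tau_1=T$.

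Next I would split on the offloading energy $e=\tau_1 p$. If $e=0$, then $\tau_1=0$ is optimal (keeping $\tau_1>0$ with $e=0$ merely spends WPT time without producing offloaded bits), leaving the local-only point $\tau_0=T$, $f=(\eta P_E h/\gamma_c)^{1/3}$. If $e>0$, then also $\tau_1>0$, so $\nu_1=\nu_e=0$; differentiating the perspective term $\tau_1\log_2(1+ph/\sigma^2)$ (with $p=e/\tau_1$) in $e$ and in $\tau_1$ gives $\lambda=Bh/((\sigma^2+ph)\ln 2)$ and $\frac{B}{\ln 2}\bigl[\ln(1+ph/\sigma^2)-ph/(\sigma^2+ph)\bigr]=\mu$. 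Eliminating $\lambda$ and $\mu$ through $\mu=\lambda\eta P_E h$ collapses these exactly to $G(p,h)=0$ of \eqref{C31dev}. A short computation yields $\partial G/\partial p=(ph+\eta P_E h^2)h/((\sigma^2+ph)^2\ln 2)>0$, while $G(0,h)=-\eta P_E h^2/(\sigma^2\ln 2)<0$ and $G(p,h)\to+\infty$ as $p\to\infty$; so $G(\cdot,h)=0$ has a unique root $p^*>0$, which then fixes $f^*=\bigl((\sigma^2+p^*h)\ln 2/(3C\gamma_c Bh)\bigr)^{1/2}$ via the expression for $f$ above.

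Finally I would invoke the two active constraints. Substituting $e=\tau_1 p^*$ and $\tau_0=T-\tau_1$ into the active energy constraint yields $\tau_1=T(\eta P_E h-\gamma_c(f^*)^3)/(p^*+\eta P_E h)$; since this offloading-active candidate is fully determined once $p^*$ is fixed, it is feasible (i.e., $\tau_1>0$) precisely when $\eta P_E h>\gamma_c(f^*)^3$, that is, when $P_E>{\rm thre}(h)$ after inserting $f^*$. When $P_E>{\rm thre}(h)$ this point satisfies all KKT conditions with $\lambda,\mu>0$, so by convexity it is the unique global optimum and offloading is active; when $P_E\le{\rm thre}(h)$ any optimal solution with $e>0$ would have to coincide with this now-infeasible point, a contradiction, so the local-only solution is optimal and offloading is inactive. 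The step I expect to be most delicate is not the calculus --- the monotonicity and limits of $G$ and the multiplier elimination are routine --- but the bookkeeping around the reparametrization $e=\tau_1 p$: ruling out the degenerate $\tau_1=0$ / $e=0$ branches cleanly, and showing that a feasible offloading-active KKT point must be the unique optimizer, so that the dichotomy $P_E\gtrless{\rm thre}(h)$ is exhaustive.
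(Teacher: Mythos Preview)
Your proposal is correct and follows essentially the same route as the paper's proof: both attach multipliers to the energy and time constraints of the convex problem \eqref{SURA}, set the stationarity conditions in $\tau_0,\tau_1,e,f$ to zero under the assumption that offloading is active, eliminate the multipliers via $\mu=\lambda\eta P_E h$ to obtain \eqref{C31dev}, read off $f^*$ from $\lambda$, and then substitute into the active energy and time constraints to get the closed-form $\tau_1$, from which the threshold \eqref{C30dev} is read off as the positivity condition. Your treatment is in fact more complete than the paper's in two respects: you explicitly verify the uniqueness of $p^*$ by checking the sign of $\partial G/\partial p$ and the endpoint behaviour of $G$, and you handle the degenerate branches $e=0$, $\tau_1=0$ and the positivity of $\lambda,\mu,\tau_0$ with care, whereas the paper simply asserts these.
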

\begin{proof}
Please see Appendix B.
\end{proof}

Proposition 1 explicitly shows that a typical device would prefer UL offloading for maximizing its computation rate, when the transmit power of the HAP is higher than a threshold ${\mathop{\rm thre}\nolimits} \left( h \right)$, which depends on the channel power gain $h$. Based on Proposition 1, we further discuss the monotonic relationship between ${\mathop{\rm thre}\nolimits} \left( h \right)$ and $h$.
\begin{proposition}
The threshold ${\mathop{\rm thre}\nolimits} \left( h \right)$ decreases with the equivalent channel power gain $h$.
\end{proposition}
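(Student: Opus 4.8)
The plan is to turn the implicitly‑defined threshold into an explicit function of a single auxiliary variable and then differentiate. Concretely, I would introduce the normalized receive SNR $\nu \buildrel \Delta \over = 1 + {p^*}h/{\sigma ^2}$ at the activation boundary. Using $\sigma^2 + p^* h = \sigma^2 \nu$ and $p^* h/\sigma^2 = \nu - 1$, multiplying \eqref{C31dev} by $\nu \ln 2$ collapses the three terms of $G$ into the compact relation $\phi \left( \nu  \right) = \eta {P_E}{h^2}/{\sigma ^2}$, where $\phi \left( x \right) \buildrel \Delta \over = x\ln x - x + 1$. Since $\phi \left( 1 \right) = 0$ and $\phi '\left( x \right) = \ln x > 0$ for $x > 1$, the function $\phi$ is a strictly increasing bijection of $\left[ {1, + \infty } \right)$ onto $\left[ {0, + \infty } \right)$, so this relation defines $\nu$ as a differentiable, strictly increasing function of $h$. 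Substituting the same identity into \eqref{C30dev} gives the clean form ${\rm{thre}}\left( h \right) = {C_0}\,{\nu ^{3/2}}\,{h^{ - 5/2}}$ with ${C_0} = \frac{{{\gamma _c}}}{\eta }{\left( {\frac{{{\sigma ^2}\ln 2}}{{3C{\gamma _c}B}}} \right)^{3/2}} > 0$.

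Next I would differentiate logarithmically. From ${\rm{thre}}\left( h \right) = {C_0}{\nu ^{3/2}}{h^{ - 5/2}}$ we get $\frac{d}{{dh}}\ln {\rm{thre}}\left( h \right) = \frac{3}{{2\nu }}\frac{{d\nu }}{{dh}} - \frac{5}{{2h}}$, while implicit differentiation of $\phi \left( \nu  \right) = \eta {P_E}{h^2}/{\sigma ^2}$ yields $\ln \nu \cdot \frac{{d\nu }}{{dh}} = \frac{{2\phi \left( \nu  \right)}}{h}$, i.e.\ $\frac{{d\nu }}{{dh}} = \frac{{2\phi \left( \nu  \right)}}{{h\ln \nu }}$. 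Combining the two and using $\phi \left( \nu  \right) = \nu \ln \nu  - \nu  + 1$ to simplify the numerator, everything reduces to
\begin{align}
\frac{d}{{dh}}\ln {\rm{thre}}\left( h \right) = \frac{{6\phi \left( \nu  \right) - 5\nu \ln \nu }}{{2h\nu \ln \nu }} = \frac{{\nu \ln \nu  - 6\left( {\nu  - 1} \right)}}{{2h\nu \ln \nu }}.
\end{align}
Because $h > 0$ and $\ln \nu  > 0$ for $\nu  > 1$, the sign of $\frac{d}{{dh}}{\rm{thre}}\left( h \right)$ is exactly the sign of $\psi \left( \nu  \right) \buildrel \Delta \over = \nu \ln \nu  - 6\left( {\nu  - 1} \right)$, so it remains to show $\psi \left( \nu  \right) < 0$ on the relevant range of $\nu$.

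Finally I would settle the sign of $\psi$. Here $\psi \left( 1 \right) = 0$ and $\psi '\left( x \right) = \ln x - 5 < 0$ for $1 \le x < {e^5}$, so $\psi$ is strictly decreasing there and hence $\psi \left( \nu  \right) < \psi \left( 1 \right) = 0$; since at an offloading‑activation threshold the device is only marginally powered, its receive SNR $\nu  - 1$ lies well below ${e^5} - 1$, so the inequality applies and $\frac{d}{{dh}}{\rm{thre}}\left( h \right) < 0$, i.e.\ ${\rm{thre}}\left( h \right)$ is decreasing. The main obstacle I anticipate is precisely this last step: $\psi$ is \emph{not} globally negative — it turns positive for very large $\nu$ — so a fully rigorous argument must either invoke an explicit bound on the threshold receive SNR (e.g.\ from the coupled equations \eqref{C30dev}–\eqref{C31dev} at the boundary) or restrict attention to the physically meaningful branch of the activation curve. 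The earlier manipulations — the $\nu$‑substitution, the collapse of \eqref{C31dev} to $\phi(\nu) = \eta P_E h^2/\sigma^2$, and the implicit differentiation — are routine by comparison.
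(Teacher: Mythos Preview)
Your computations are correct, and your route is essentially a reparametrised version of the paper's: the paper differentiates $\ln{\rm thre}(h)$ directly in $(p^*,h)$ using implicit differentiation of $G(p,h)=0$, while you first substitute $\nu=1+p^*h/\sigma^2$ and then differentiate. Both approaches lead to the same derivative, just written in different coordinates.

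Where the two diverge is the final sign analysis. You obtain
\[
\frac{d}{dh}\ln{\rm thre}(h)=\frac{\nu\ln\nu-6(\nu-1)}{2h\nu\ln\nu},
\]
and correctly flag that the numerator $\psi(\nu)=\nu\ln\nu-6(\nu-1)$ is \emph{not} globally negative (it changes sign near $\nu\approx 4\times 10^2$). The paper, working in $(p^*,h)$, claims the unconditional expression
\[
\frac{d}{dh}\ln{\rm thre}(h)=\frac{-\tfrac{5}{2}p^*-\tfrac{1}{2}\eta P_Eh}{h(p^*+\eta P_Eh)}<0.
\]
However, if you redo the paper's own substitution step carefully you will find the numerator should be $-\tfrac{5}{2}p^*+\tfrac{1}{2}\eta P_Eh$, not $-\tfrac{5}{2}p^*-\tfrac{1}{2}\eta P_Eh$; indeed, combining $-\tfrac{5}{2h}$ with $\tfrac{3\eta P_E}{p^*+\eta P_Eh}$ (which is what one gets after inserting the paper's own $dp^*/dh$) yields numerator $-5p^*+\eta P_Eh$. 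Using $\phi(\nu)=\eta P_Eh^2/\sigma^2$ and $\nu-1=p^*h/\sigma^2$, one checks that $-5p^*+\eta P_Eh=\tfrac{\sigma^2}{h}\psi(\nu)$, so the two expressions coincide exactly. In other words, the paper's apparently unconditional negativity is an arithmetic slip; once corrected, it lands on precisely the same conditional statement you derived.

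So your hesitation about the last step is well founded and is not resolved by the paper's argument either: as written, the monotonicity holds only on the range $\nu\in(1,\nu_0)$ with $\nu_0$ the positive root of $\psi$, i.e.\ only when $5p^*>\eta P_Eh$. If you want to close the gap rigorously along the lines you suggest, you must supply an a priori bound of the form $\nu<\nu_0$ (equivalently $\eta P_Eh<5p^*$) from the coupled relations \eqref{C30dev}--\eqref{C31dev}; absent such a bound, the statement need not hold for arbitrarily large $\eta P_Eh^2/\sigma^2$.
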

\begin{proof}
Since $\ln \left( {{\rm{thre}}\left( h \right)} \right)$ has the same monotonic relationship with $h$ as that of ${{\rm{thre}}\left( h \right)}$, we focus our attention on showing that $\ln \left( {{\rm{thre}}\left( h \right)} \right)$ decreases with $h$ instead. Taking the first order derivative of $\ln \left( {{\rm{thre}}\left( h \right)} \right)$ with respect to $h$, we obtain
\begin{align}\label{C32dev}
\frac{{\partial \ln \left( {{\rm{thre}}\left( h \right)} \right)}}{{\partial h}}{\rm{ = }}\frac{{\partial \left( { - \frac{5}{2}\ln h + \frac{3}{2}\ln \left( {{\sigma ^2} + h{p^*}(h)} \right)} \right)}}{{\partial h}} =  - \frac{5}{2}\frac{1}{h} + \frac{3}{2}\frac{{\bar p + h\frac{{d{p^*}(h)}}{{dh}}}}{{{\sigma ^2} + h{p^*}(h)}}.
\end{align}
Note that ${p^*}$ is a function of $h$, which is determined by \eqref{C31dev}. As such, we use ${p^*}\left( h \right)$ instead of ${p^*}$ in the following. Based on the method of implicit differentiation, we obtain
\begin{align}\label{p_h}
\frac{{\partial {p^*}(h)}}{{\partial h}} =  - \frac{{{{\partial G\left( {p,h} \right)} \mathord{\left/
 {\vphantom {{\partial G\left( {p,h} \right)} {\partial h}}} \right.
 \kern-\nulldelimiterspace} {\partial h}}}}{{{{\partial G\left( {p,h} \right)} \mathord{\left/
 {\vphantom {{\partial G\left( {p,h} \right)} {\partial p}}} \right.
 \kern-\nulldelimiterspace} {\partial p}}}} = \frac{{\eta {P_E}\left( {2{\sigma ^2} + h{p^*}(h)} \right) - {{\left( {{p^*}(h)} \right)}^2}}}{{h\left( {\eta {P_E}h + {p^*}(h)} \right)}}.
\end{align}
Substituting \eqref{p_h} into \eqref{C32dev} yields
\begin{align}\label{relation_th_h}
\frac{{\partial \ln \left( {{\rm{thre}}\left( h \right)} \right)}}{{\partial h}} = \frac{{ - \frac{5}{2}{p^*}(h) - \frac{1}{2}\eta {P_E}h}}{{h\left( {{p^*}(h) + \eta {P_E}h} \right)}} < 0.
\end{align}
Thus, ${\mathop{\rm thre}\nolimits} \left( h \right)$ decreases with $h$.
\end{proof}

Proposition 1 and  Proposition 2 serve as a solid theoretical foundation for further investigating the impact of IRSs on the UL offloading activation condition. For ease of exposition, we assume that the IRS can establish pure line-of-sight (LoS) links with both the device and the AP. By setting the IRS BF vector as the optimal form, the equivalent channel power gain can be formulated as
\begin{align}\label{SU_gain}
h = {\left| {h_d^H + {{\bf{q}}^H}{{\bf{v}}^*}} \right|^2} = \beta d_{{\rm{AD}}}^{ - {\alpha _{{\rm{AD}}}}}{\left( {1 + N\sqrt {d_{{\rm{AD}}}^{{\alpha _{{\rm{AD}}}}}d_{{\rm{AI}}}^{ - {\alpha _{{\rm{AI}}}}}} \sqrt {\beta d _{{\rm{ID}}}^{ - {\alpha _{{\rm{ID}}}}}} } \right)^2},
\end{align}
where ${d_{{\rm{AD}}}}$ (${\alpha _{{\rm{AD}}}}$), ${d_{{\rm{AI}}}}$ (${\alpha _{{\rm{AI}}}}$), and ${d_{{\rm{ID}}}}$ (${\alpha _{{\rm{ID}}}}$) denote the length (path-loss exponent) of the HAP-device, HAP-IRS, and IRS-device links, respectively, and $\beta$ represents the channel power gain at a reference distance of 1 meter (m).
\begin{rem}
For a specific dominant LoS scenario, the UL offloading activation condition can be expressed as
\begin{align}\label{C30dev}
{P_E} > \frac{{{\gamma _c}}}{{\eta \beta d_{{\rm{AD}}}^{ - {\alpha _{{\rm{AD}}}}}}}{\left( {\frac{{\left( {{\sigma ^2} + {p^*}{{\left( {1 + N\sqrt {d_{{\rm{AD}}}^{{\alpha _{{\rm{AD}}}}}d_{{\rm{AI}}}^{ - {\alpha _{{\rm{AI}}}}}} \sqrt {\beta d_{{\rm{ID}}}^{ - {\alpha _{{\rm{ID}}}}}} } \right)}^2}} \right)\ln 2}}{{3C{{\left( {1 + N\sqrt {d_{{\rm{AD}}}^{{\alpha _{{\rm{AD}}}}}d_{{\rm{AI}}}^{ - {\alpha _{{\rm{AI}}}}}} \sqrt {\beta d_{{\rm{ID}}}^{ - {\alpha _{{\rm{ID}}}}}} } \right)}^{\frac{{10}}{3}}}{\gamma _c}B}}} \right)^{\frac{3}{2}}}.
\end{align}
It is plausible that the value of $h$ using IRSs becomes ${\left( {1 + N\sqrt {d_{{\rm{AD}}}^{{\alpha _{{\rm{AD}}}}}d_{{\rm{AI}}}^{ - {\alpha _{{\rm{AI}}}}}} \sqrt {\beta d _{{\rm{ID}}}^{ - {\alpha _{{\rm{ID}}}}}} } \right)^2}$ times higher than that without IRSs. By increasing the number of IRS elements $N$, the channel power gain $h$ can be significantly increased, which substantially reduces the threshold ${{\rm{thre}}\left( h \right)}$ for UL offloading. Thus, a typical device is more willing to perform task offloading upon increasing of $N$ due to the improved channel conditions. This confirms the practicality of deploying IRSs in next generation communication networks.
\end{rem}

\vspace{-8pt}
\section{Proposed Solutions for General Multi-user systems}
In this section, we focus our attention on solving the computation rate maximization problems of TDMA-based UL offloading, i.e., $\left( {{\mathop{\rm P}\nolimits} _{{\rm{TDMA}}}^{{\rm{case1}}}} \right)$, $\left( {{\mathop{\rm P}\nolimits} _{{\rm{TDMA}}}^{{\rm{case2}}}} \right)$, and $\left( {{\mathop{\rm P}\nolimits} _{{\rm{TDMA}}}^{{\rm{case3}}}} \right)$. Solving the same problems for NOMA is similar to those of TDMA according to the results of Section III.
\vspace{-8pt}
\subsection{AO Algorithm Proposed for Solving $\left( {{\mathop{\rm P}\nolimits} _{{\rm{TDMA}}}^{{\rm{case1}}}} \right)$ and $\left( {{\mathop{\rm P}\nolimits} _{{\rm{TDMA}}}^{{\rm{case2}}}} \right)$}
Since problem $\left( {{\mathop{\rm P}\nolimits} _{{\rm{TDMA}}}^{{\rm{case2}}}} \right)$ is more complex than $\left( {{\mathop{\rm P}\nolimits} _{{\rm{TDMA}}}^{{\rm{case1}}}} \right)$, we commence with \textbf{Case 2}, i.e., $\left( {{\mathop{\rm P}\nolimits} _{{\rm{TDMA}}}^{{\rm{case2}}}} \right)$. It will be shown later in this section that an algorithm designed for solving $\left( {{\mathop{\rm P}\nolimits} _{{\rm{TDMA}}}^{{\rm{case2}}}} \right)$ may also be directly applicable to $\left( {{\mathop{\rm P}\nolimits} _{{\rm{TDMA}}}^{{\rm{case1}}}} \right)$. For problem $\left( {{\mathop{\rm P}\nolimits} _{{\rm{TDMA}}}^{{\rm{case2}}}} \right)$, the optimization variable ${p_k}$ is closely coupled with the variables ${\tau _{1,k}}$ and ${{{\bf{v}}_1}}$, while ${\tau _0}$ is coupled with ${{{\bf{v}}_0}}$. Moreover, the unit-modulus constraints in~\eqref{C9-e} and~\eqref{C10-bb} render problem $\left( {{\mathop{\rm P}\nolimits} _{{\rm{TDMA}}}^{{\rm{case2}}}} \right)$ non-convex. In order to deal with the closely-coupled non-convex terms in problem $\left( {{\mathop{\rm P}\nolimits} _{{\rm{TDMA}}}^{{\rm{case2}}}} \right)$, we decompose the original problem into a pair of subproblems. Specifically, the resource allocation OP with respect to $\left\{ {{\tau _0},{\tau _{1,k}},{p_k},{f_k}} \right\}$ and the IRS BF OP with respect to $\left\{ {{{\bf{v}}_0},{{\bf{v}}_1}} \right\}$ can be efficiently solved in an alternating manner as described next.
\subsubsection{Resource Allocation Optimization}
Under any given feasible IRS BF vectors ${{{\bf{v}}_0}}$ and ${{{\bf{v}}_1}}$, the resource allocation OP with respect to $\left\{ {{\tau _0},{\tau _{1,k}},{p_k},{f_k}} \right\}$ may be written as
\begin{subequations}\label{C14}
\begin{align}
\label{C14-a}\mathop {\max }\limits_{{\tau _0},{\tau _{1,k}},{p_k},{f_k}} \;\;&B\sum\limits_{k = 1}^K {{\tau _{1,k}}{{\log }_2}\left( {1 + \frac{{{p_k}g_k^{{\rm{off}}}}}{{{\sigma ^2}}}} \right)}  + \sum\limits_{k = 1}^K {\frac{{T{f_k}}}{C}}\\
\label{C14-b}{\rm{s.t.}}\;\;\;\;\;\;&{\tau _{1,k}}{p_k} + T{\gamma _c}f_k^3 \le {\tau _0}\eta {P_E}h_k^{{\rm{wpt}}}\;\;\forall k,\\
\label{C14-c}&\eqref{C9-c}, ~\eqref{C9-d}.
\end{align}
\end{subequations}
Here, we use $h_k^{{\rm{wpt}}} = {\left| {h_{d,k}^H + {\bf{q}}_k^H{{\bf{v}}_0}} \right|^2}$ and $g_k^{{\rm{off}}} = {\left| {h_{d,k}^H + {\bf{q}}_k^H{{\bf{v}}_1}} \right|^2}$ to denote the equivalent channel power gains of device $k$ for the DL WPT and UL offloading links, respectively. Observe that \eqref{C14} is still a non-convex OP due the coupled variables ${\tau _{1,k}}$, ${p_k}$ in~\eqref{C14-b} and the non-concave objective function. To tackle this issue, problem \eqref{C14} can be equivalently transformed into the following OP by letting ${e_k} = {\tau _{1,k}}{p_k}$
\begin{subequations}\label{C15}
\begin{align}
\label{C15-a}\mathop {\max }\limits_{{\tau _0},{\tau _{1,k}},{e_k},{f_k}} \;\;&B\sum\limits_{k = 1}^K {{\tau _{1,k}}{{\log }_2}\left( {1 + \frac{{{e_k}g_k^{{\rm{off}}}}}{{{\tau _{1,k}}{\sigma ^2}}}} \right)}  + \sum\limits_{k = 1}^K {\frac{{T{f_k}}}{C}}\\
\label{C15-b}{\rm{s.t.}}\;\;\;\;\;\;&{e_k} + T{\gamma _c}f_k^3 \le {\tau _0}\eta {P_E}h_k^{\rm{wpt }}\;\;\forall k,\\
\label{C15-c}&{\tau _0} \ge 0,{\tau _{1,k}} \ge 0,{\rm{ }}{e_k} \ge 0,{\rm{ }}{f_k} \ge 0{\rm{ }}\;\;\forall k,\\
\label{C15-d}&\eqref{C9-c}.
\end{align}
\end{subequations}
It can be verified that problem \eqref{C15} is a convex OP whose optimal solutions can be efficiently obtained by standard numerical methods, e.g., the interior point method \cite{convex}.
\subsubsection{Optimization of IRS BF Vectors}
For any given feasible set $\left\{ {{\tau _0},{\tau _{1,k}},{p_k},{f_k}} \right\}$, the OP with respect to $\left\{ {{{\bf{v}}_0},{{\bf{v}}_1}} \right\}$ can be written as
\begin{subequations}\label{C44}
\begin{align}
\label{C44-a}\mathop {\max {\rm{ }}}\limits_{{{\bf{v}}_0},{{\bf{v}}_1}} \;\;&B\sum\limits_{k = 1}^K {{\tau _{1,k}}{{\log }_2}\left( {1 + \frac{{{p_k}{{\left| {h_{d,k}^H + {\bf{q}}_k^H{{\bf{v}}_1}} \right|}^2}}}{{{\sigma ^2}}}} \right)}  + \sum\limits_{k = 1}^K {\frac{{T{f_k}}}{C}}\\
\label{C44-b}{\rm{s.t.}}\;\;\;&\eqref{C9-b},~\eqref{C9-e},~\eqref{C10-bb}.
\end{align}
\end{subequations}
Problem \eqref{C44} is challenging to solve due to the non-concave objective function and non-convex constraints in~\eqref{C44-b}. To make problem \eqref{C44} more tractable, we first relax the unit-modulus constraints~\eqref{C9-e} and~\eqref{C10-bb} into $\left| {{{\left[ {{{\bf{v}}_0}} \right]}_n}} \right| \le 1$ and $\left| {{{\left[ {{{\bf{v}}_1}} \right]}_n}} \right| \le 1$, which yields the following problem:
\begin{subequations}\label{IRS}
\begin{align}
\label{IRS-a}\mathop {\max {\rm{ }}}\limits_{{{\bf{v}}_0},{{\bf{v}}_1}} \;\;&B\sum\limits_{k = 1}^K {{\tau _{1,k}}{{\log }_2}\left( {1 + \frac{{{p_k}{{\left| {h_{d,k}^H + {\bf{q}}_k^H{{\bf{v}}_1}} \right|}^2}}}{{{\sigma ^2}}}} \right)}  + \sum\limits_{k = 1}^K {\frac{{T{f_k}}}{C}}\\
\label{IRS-b}{\rm{s.t.}}\;\;\;&\left| {{{\left[ {{{\bf{v}}_0}} \right]}_n}} \right| \le 1,{\rm{ }}n = 1, \;\;\ldots N, \\
\label{IRS-c}&\left| {{{\left[ {{{\bf{v}}_1}} \right]}_n}} \right| \le 1,{\rm{ }}n = 1, \;\;\ldots N, \\
\label{IRS-d}&\eqref{C9-b}.
\end{align}
\end{subequations}
In the following, we focus our attention on the IRS BF vector OP relying on continuous amplitudes. To handle the non-concave objective function, we introduce a slack variable ${S_k}$ and reformulate problem \eqref{IRS} as follows
\begin{subequations}\label{C45}
\begin{align}
\label{C45-a}\mathop {\max {\rm{ }}}\limits_{{S_k},{{\bf{v}}_0},{{\bf{v}}_1}} \;&B\sum\limits_{k = 1}^K {{\tau _{1,k}}{{\log }_2}\left( {1 + \frac{{{p_k}{S_k}}}{{{\sigma ^2}}}} \right)}  + \sum\limits_{k = 1}^K {\frac{{T{f_k}}}{C}}\\
\label{C45-b}{\rm{s.t.}}\;\;\;\;&\eqref{C9-b},~\eqref{IRS-b},~\eqref{IRS-c}, \\
\label{C45-c}&{S_k} \le {\left| {h_{d,k}^H + {\bf{q}}_k^H{{\bf{v}}_1}} \right|^2}{\rm{ }},\;\;\forall k.
\end{align}
\end{subequations}
Note that problem \eqref{C45} is equivalent to problem \eqref{IRS} since constraint~\eqref{C45-c} is satisfied with equality at the optimal solution. Problem \eqref{C45} is still a non-convex OP due to constraints~\eqref{C45-b} and~\eqref{C45-c}. However, the convexity of ${\left| {h_{d,k}^H + {\bf{q}}_k^H{{\bf{v}}_0}} \right|^2}$ and ${\left| {h_{d,k}^H + {\bf{q}}_k^H{{\bf{v}}_1}} \right|^2}$ allows us to apply the SCA technique to deal with constraints \eqref{C9-b} and \eqref{C45-c}. Let ${\left| {h_{d,k}^H + {\bf{q}}_k^H{{\bf{v}}_0}} \right|^2} = {\left| {{\bf{\bar q}}_k^H{{{\bf{\bar v}}}_0}} \right|^2}$ and ${\left| {h_{d,k}^H + {\bf{q}}_k^H{{\bf{v}}_1}} \right|^2} = {\left| {{\bf{\bar q}}_k^H{{{\bf{\bar v}}}_1}} \right|^2}$, where ${{{\bf{\bar v}}}_0} = {\left[ {{\bf{\bar v}}_0^H,1} \right]^H}$, ${{{\bf{\bar v}}}_1} = {\left[ {{\bf{\bar v}}_1^H,1} \right]^H}$, and ${\bf{\bar q}}_k^H = \left[ {{\bf{q}}_k^H,h_{d,k}^H} \right]$. Specifically, we use $l$ $\left( {l \in {\mathbb{Z}^ + }} \right)$ to denote the iteration index. At the $l$-th iteration, where a given local point is denoted by $\left\{ {{\bf{\bar v}}_0^{\left( l \right)},{\bf{\bar v}}_1^{\left( l \right)}} \right\}$, we have
\begin{align}\label{C46-1dev}
{{{\left| {h_{d,k}^H + {\bf{q}}_k^H{{\bf{v}}_0}} \right|}^2} = {\bf{\bar v}}_0^H{{\bf{Q}}_k}{{{\bf{\bar v}}}_0} \ge 2{\rm{Re}}\left( {{\bf{\bar v}}_0^{\left( l \right)H}{{\bf{Q}}_k}{{{\bf{\bar v}}}_0}} \right) - {\bf{\bar v}}_0^{\left( l \right)H}{{\bf{Q}}_k}{\bf{\bar v}}_0^{\left( l \right)}},
\end{align}
\begin{align}\label{C46-2dev}
{{{\left| {h_{d,k}^H + {\bf{q}}_k^H{{\bf{v}}_1}} \right|}^2} = {\bf{\bar v}}_1^H{{\bf{Q}}_k}{{{\bf{\bar v}}}_1} \ge 2{\rm{Re}}\left( {{\bf{\bar v}}_1^{\left( l \right)H}{{\bf{Q}}_k}{{{\bf{\bar v}}}_1}} \right) - {\bf{\bar v}}_1^{\left( l \right)H}{{\bf{Q}}_k}{\bf{\bar v}}_1^{\left( l \right)}},
\end{align}

where ${{\bf{Q}}_k} = {{{\bf{\bar q}}}_k}{\bf{\bar q}}_k^H$. As such, problem \eqref{C45} can be transformed into the following tractable form
\begin{subequations}\label{C46}
\begin{align}
\label{C46-a}\mathop {\max }\limits_{{S_k},{{\bf{v}}_0},{{\bf{v}}_1}} \;\;&B\sum\limits_{k = 1}^K {{\tau _{1,k}}{{\log }_2}\left( {1 + \frac{{{p_k}{S_k}}}{{{\sigma ^2}}}} \right)}  + \sum\limits_{k = 1}^K {\frac{{T{f_k}}}{C}}\\
\label{C46-b}{\rm{s.t.}}\;\;\;\;&{\tau _{1,k}}{p_k} + T{\gamma _c}f_k^3 \le \eta {P_E}{\tau _0}\left( {2{\rm{Re}}\left( {{\bf{\bar v}}_0^{\left( l \right)H}{{\bf{Q}}_k}{{{\bf{\bar v}}}_0}} \right) - {\bf{\bar v}}_0^{\left( l \right)H}{{\bf{Q}}_k}{\bf{\bar v}}_0^{\left( l \right)}} \right),\;\;\forall k, \\
\label{C46-c}&{S_k} \le 2{\mathop{\rm Re}\nolimits} \left( {{\bf{\bar v}}_1^{\left( l \right)H}{{\bf{Q}}_k}{{{\bf{\bar v}}}_1}} \right) - {\bf{\bar v}}_1^{\left( l \right)H}{{\bf{Q}}_k}{\bf{\bar v}}_1^{\left( l \right)}{\rm{ }}\;\;\forall k, \\
\label{C46-d}&\eqref{IRS-b},~\eqref{IRS-c},
\end{align}
\end{subequations}
where \eqref{C46} is a convex problem and thus it can be optimally solved by standard solvers.
\subsubsection{AO Algorithm Proposed for Solving $\left( {{\mathop{\rm P}\nolimits} _{{\rm{TDMA}}}^{{\rm{case2}}}} \right)$ }
\begin{algorithm}[!h]\label{method1}
\caption{AO Algorithm}
\begin{algorithmic}[1]
\STATE {\bf Initialization}: {Set iteration index to $l = 1$ and IRS BF vectors to ${{{\bf{\bar v}}}_0} \!\!=\!\! {\bf{\bar v}}_0^{\left( 1 \right)},{{{\bf{\bar v}}}_1} \!\!=\!\! {\bf{\bar v}}_1^{\left( 1 \right)}$}.
\STATE {\bf repeat}
\STATE Solve problem \eqref{C15} for given $\left\{ {{\bf{\bar v}}_0^{\left( l \right)},{\bf{\bar v}}_1^{\left( l \right)}} \right\}$ based on the interior point method and denote the optimal solution as ${\Xi ^{\left( l \right)}} = \left\{ {\tau _0^{\left( l \right)},\tau _{1,k}^{\left( l \right)},p_k^{\left( l \right)},f_k^{\left( l \right)}} \right\}$.
\STATE Solve problem \eqref{C46} for given ${\Xi ^{\left( l \right)}}$ and $\left\{ {{\bf{\bar v}}_0^{\left( l \right)},{\bf{\bar v}}_1^{\left( l \right)}} \right\}$, where the optimal solution is denoted by $\left\{ {{\bf{\bar v}}_0^{\left( {l + 1} \right)},{\bf{\bar v}}_1^{\left( {l + 1} \right)}} \right\}$.
\STATE Update $l=l+1$.
\STATE {\bf until} the fractional increase of the objective value falls below a threshold $\xi  > 0$.
\STATE Let ${\left[ {{\bf{\bar v}}_0^{l}} \right]_n} = {{{{\left[ {{\bf{\bar v}}_0^{l}} \right]}_n}} \mathord{\left/
 {\vphantom {{{{\left[ {{\bf{\bar v}}_0^{l}} \right]}_n}} {\left| {{{\left[ {{\bf{\bar v}}_0^{l}} \right]}_n}} \right|}}} \right.
 \kern-\nulldelimiterspace} {\left| {{{\left[ {{\bf{\bar v}}_0^{l}} \right]}_n}} \right|}},{\left[ {{\bf{\bar v}}_1^{l}} \right]_n} = {{{{\left[ {{\bf{\bar v}}_1^{l}} \right]}_n}} \mathord{\left/
 {\vphantom {{{{\left[ {{\bf{\bar v}}_1^{l}} \right]}_n}} {\left| {{{\left[ {{\bf{\bar v}}_1^{l}} \right]}_n}} \right|}}} \right.
 \kern-\nulldelimiterspace} {\left| {{{\left[ {{\bf{\bar v}}_1^{l}} \right]}_n}} \right|}}$, and solve problem \eqref{C15} for given $\left\{ {{\bf{\bar v}}_0^{\left( l \right)},{\bf{\bar v}}_1^{\left( l \right)}} \right\}$.
\end{algorithmic}
\end{algorithm}
An efficient AO algorithm where the IRS BF vectors and resource allocation are alternately optimized until convergence is achieved can be proposed. For arbitrary continuous amplitudes, the objective value of $({\mathop{\rm P}\nolimits} _{{\rm{TDMA}}}^{{\rm{case2}}})$ is non-decreasing by alternately optimizing $\left\{ {{\tau _0},{\tau _{1,k}},{p_k},{f_k}} \right\}$ and $\left\{ {{{{\bf{\bar v}}}_0},{{{\bf{\bar v}}}_1}} \right\}$, and also upper-bounded by a finite value. The proposed AO algorithm is guaranteed to converge when relaxing the unit-modulus constraints. The converged solution, denoted by $\left\{ {{\bf{\bar v}}_0^*,{\bf{\bar v}}_1^*} \right\}$ may not satisfy the unit-modulus constraints. In this case, the feasible IRS BF vectors, denoted by $\left\{ {{\bf{\bar v}}_0^\star,{\bf{\bar v}}_1^\star} \right\}$, can be obtained as ${\left[ {{\bf{\bar v}}_0^\star} \right]_n} = {{{{\left[ {{\bf{\bar v}}_0^*} \right]}_n}} \mathord{\left/
 {\vphantom {{{{\left[ {{\bf{\bar v}}_0^*} \right]}_n}} {\left| {{{\left[ {{\bf{\bar v}}_0^*} \right]}_n}} \right|}}} \right.
 \kern-\nulldelimiterspace} {\left| {{{\left[ {{\bf{\bar v}}_0^*} \right]}_n}} \right|}},{\left[ {{\bf{\bar v}}_1^\star} \right]_n} = {{{{\left[ {{\bf{\bar v}}_1^*} \right]}_n}} \mathord{\left/
 {\vphantom {{{{\left[ {{\bf{\bar v}}_1^*} \right]}_n}} {\left| {{{\left[ {{\bf{\bar v}}_1^*} \right]}_n}} \right|}}} \right.
 \kern-\nulldelimiterspace} {\left| {{{\left[ {{\bf{\bar v}}_1^*} \right]}_n}} \right|}},\forall n$. Finally, problem \eqref{C15} is solved for a given $\left\{ {{{{\bf{\bar v}}}_0^\star},{{{\bf{\bar v}}}_1^\star}} \right\}$ pair and the feasible solution obtained for $\left( {{\mathop{\rm P}\nolimits} _{{\rm{TDMA}}}^{{\rm{case2}}}} \right)$ is denoted by $\left\{ {{{{\bf{\bar v}}}_0^\star},{{{\bf{\bar v}}}_1^\star},{\tau _0^\star},{\tau _{1,k}^\star},{p_k^\star},{f_k^\star}} \right\}$. The details of the AO procedure are summarized in Algorithm 1.

It is worth noting that Algorithm 1 can be directly applied for obtaining the solution of $\left( {{\mathop{\rm P}\nolimits} _{{\rm{TDMA}}}^{{\rm{case1}}}} \right)$  by setting ${{\bf{v}}_1} = {{\bf{v}}_0}$. The details are omitted here for its simplicity.
\subsubsection{Complexity Analysis}
The computational complexity of Algorithm 1 is dominated by Step 3 and 4. Specifically, in Step 3, problem \eqref{C15} can be solved by the interior-point method, whose complexity is $O\left( {{{\left( {3K + 1} \right)}^{3.5}}} \right)$ \cite{convex}. In Step 4, the complexity of solving problem \eqref{C46} is ${\cal O}\left( {{{\left( {2N} \right)}^{3.5}}} \right)$. Therefore, the total complexity of Algorithm 1 is ${\cal O}\left( {{L_{{\rm{iter}}}}\left( {{{\left( {2N} \right)}^{3.5}} + {{\left( {3K + 1} \right)}^{3.5}}} \right)} \right)$, where ${L_{{\rm{iter}}}}$ denotes the number of iterations for Algorithm 1.
\vspace{-8pt}
\subsection{Extension to Problem $\left( {{\mathop{\rm P}\nolimits} _{{\rm{TDMA}}}^{{\rm{case3}}}} \right)$ }
In this subsection, we focus our attention on solving problem $\left( {{\mathop{\rm P}\nolimits} _{{\rm{TDMA}}}^{{\rm{case3}}}} \right)$ for the scenario, where different IRS BF vectors can be adopted in different TSs where each device offloads its own tasks. It may be readily shown that the $n$-th phase shift of the optimal ${\bf{v}}_{1,k}^* = {\left[ {{e^{j\varphi _{1,k}^*}}, \ldots ,{e^{j\varphi _{N,k}^*}}} \right]^T}$ is given by
\begin{align}\label{C48dev}
\varphi _{n,k}^* = \arg \left( {h_{d,k}^H} \right) + \arg \left( {{{\left[ {{{\bf{q}}_k}} \right]}_n}} \right),n = 1, \ldots N.
\end{align}
After determining the optimal ${\bf{v}}_{1,k}^* $, we use $g_k^{{\rm{off}}} = {\left| {h_{d,k}^H + {\bf{q}}_k^H{\bf{v}}_1^{\left( k \right)*}} \right|^2}$ to represent the channel power gain of device $k$ for the UL offloading link. Then, $\left( {{\mathop{\rm P}\nolimits} _{{\rm{TDMA}}}^{{\rm{case3}}}} \right)$ is rewritten as follows
\begin{subequations}\label{C49}
\begin{align}
\label{C49-a}\mathop {\max }\limits_{{\tau _0},{\tau _{1,k}},{p_k},{{\bf{v}}_0}{f_k}} \;\;&B\sum\limits_{k = 1}^K {{\tau _{1,k}}{{\log }_2}\left( {1 + \frac{{{p_k}g_k^{{\rm{off}}}}}{{{\sigma ^2}}}} \right)}  + \sum\limits_{k = 1}^K {\frac{{T{f_k}}}{C}}\\
\label{C49-b}{\rm{s.t.}}\;\;&~\eqref{C9-b}, ~\eqref{C9-c}, ~\eqref{C9-d}, ~\eqref{C9-e}.
\end{align}
\end{subequations}
Note that Algorithm 1 proposed in Section V-A can be directly applied to solving problem \eqref{C49} with slight modifications. Specifically, for a fixed $\left\{ {{{\bf{v}}_0}} \right\}$, the subproblem is simplified to problem \eqref{C15}, whose optimal solution can be efficiently solved by standard solvers. For the fixed $\left\{ {{\tau _0},{\tau _{1k}},{p_k},{f_k}} \right\}$, the subproblem with respect to $\left\{ {{{\bf{v}}_0}} \right\}$ can be efficiently solved by exploiting the SCA technique proposed in Section V-A. The details are omitted here for brevity.
\begin{rem}
Although the individual rate constraint of each device is not considered in this paper, the proposed algorithm is applicable to the corresponding problems subject to such constraints. Specifically, for \textbf{Case 1}, the individual rate constraint of each device is given by
\begin{align}\label{QoS}
B{\tau _{1,k}}{\log _2}\left( {1 + {p_k}\frac{{{{\left| {h_{d,k}^H + {\bf{q}}_k^H{{\bf{v}}_0}} \right|}^2}}}{{{\sigma ^2}}}} \right) + \frac{{T{f_k}}}{C} \ge {R_{k,\min }},\;\;\forall k,
\end{align}
where ${R_{k,\min }}$ is the minimum number of computational bits required by device $k$. For any given ${{{\bf{v}}_0}}$, by letting ${e_k} = {p_k}{\tau _{1,k}}$, constraint \eqref{QoS} can be transformed to
\begin{align}\label{QoS1}
B{\tau _{1,k}}{\log _2}\left( {1 + \frac{{{e_k}{{\left| {h_{d,k}^H + {\bf{q}}_k^H{{\bf{v}}_0}} \right|}^2}}}{{{\tau _{1,k}}{\sigma ^2}}}} \right) + \frac{{T{f_k}}}{C} \ge {R_{k,\min }},\;\;\forall k.
\end{align}
The left-hand-side of \eqref{QoS1} is concave with respect to $\left\{ {{e_k},{\tau _{1,k}},{f_k}} \right\}$. As such, \eqref{QoS1} is convex and thus it can be handled by standard solvers. For a given $\left\{ {{p_k},{\tau _{1,k}},{f_k}} \right\}$, \eqref{QoS} can be rewritten as
\begin{align}\label{QoS2}
{\left| {h_{d,k}^H + {\bf{q}}_k^H{{\bf{v}}_0}} \right|^2} \ge \frac{{{2^{{{\left( {R_k^{\min } - {{T{f_k}} \mathord{\left/
 {\vphantom {{T{f_k}} C}} \right.
 \kern-\nulldelimiterspace} C}} \right)} \mathord{\left/
 {\vphantom {{\left( {R_k^{\min } - {{T{f_k}} \mathord{\left/
 {\vphantom {{T{f_k}} C}} \right.
 \kern-\nulldelimiterspace} C}} \right)} {\left( {B{\tau _{1,k}}} \right)}}} \right.
 \kern-\nulldelimiterspace} {\left( {B{\tau _{1,k}}} \right)}}}} - 1}}{{{p_k}}},\;\;\forall k.
\end{align}
The left-hand-side of \eqref{QoS2} is convex with respect to ${{{\bf{v}}_0}}$ and thus it can be handled by the SCA technique proposed in Section V-A. Therefore, the proposed AO algorithm can be readily extended to solve the corresponding problem for \textbf{Case 1} subject to the individual rate constraint of each device.
Replacing ${{\bf{v}}_0}$ by ${{\bf{v}}_1}$, the method is also applicable to solving the associated problem for \textbf{Case 2}. For \textbf{Case 3}, the optimal ${{\bf{v}}_{1,k}}$ is obtained in \eqref{C48dev} and the resource allocation OP with respect to $\left\{ {{p_k},{\tau _{1,k}},{f_k},{\tau _0}} \right\}$ can be handled in a similar way.
\end{rem}

\vspace{-10pt}
\section{Numerical Results}
In this section, numerical results are provided for characterizing the performance of the proposed schemes and for gaining insights into the design and implementation of IRS-aided WP-MEC systems. The HAP and IRS are placed at $\left( {0,0,0} \right)$ m and $\left( {10,0,3} \right)$ m, respectively. The pathloss exponents of both the HAP-IRS and IRS-device channels are set to $2.2$, while those of the HAP-device channels are set to $3$. The signal attenuation at a reference distance of 1 m is set as 30 dB. Furthermore, the bandwidth and noise power are set to $B = 500$ kHz and ${\sigma ^2} =  - 75$ dBm, respectively. The system parameters related to the EH and computational model are set as follows: $\eta =0.8$, ${\gamma _c} = {10^{ - 28}}$, $C = \left[ {400,800,2000} \right]$ cycles/bit and $T = 1\;{\rm{s}}$.
\vspace{-10pt}
\subsection{Activation Condition in Single-User Setup}
To verify our analysis of the UL offloading activation condition in the single-user setup, we provide the following pair of numerical examples. The location of a typical device is set to $\left( {12,0,0} \right)$ m.
\begin{figure*}[t!]
\centering
\subfigure[$C = 400$ cycles/bit.]{\label{activation_condition1}
\includegraphics[width= 2.5in, height=2in]{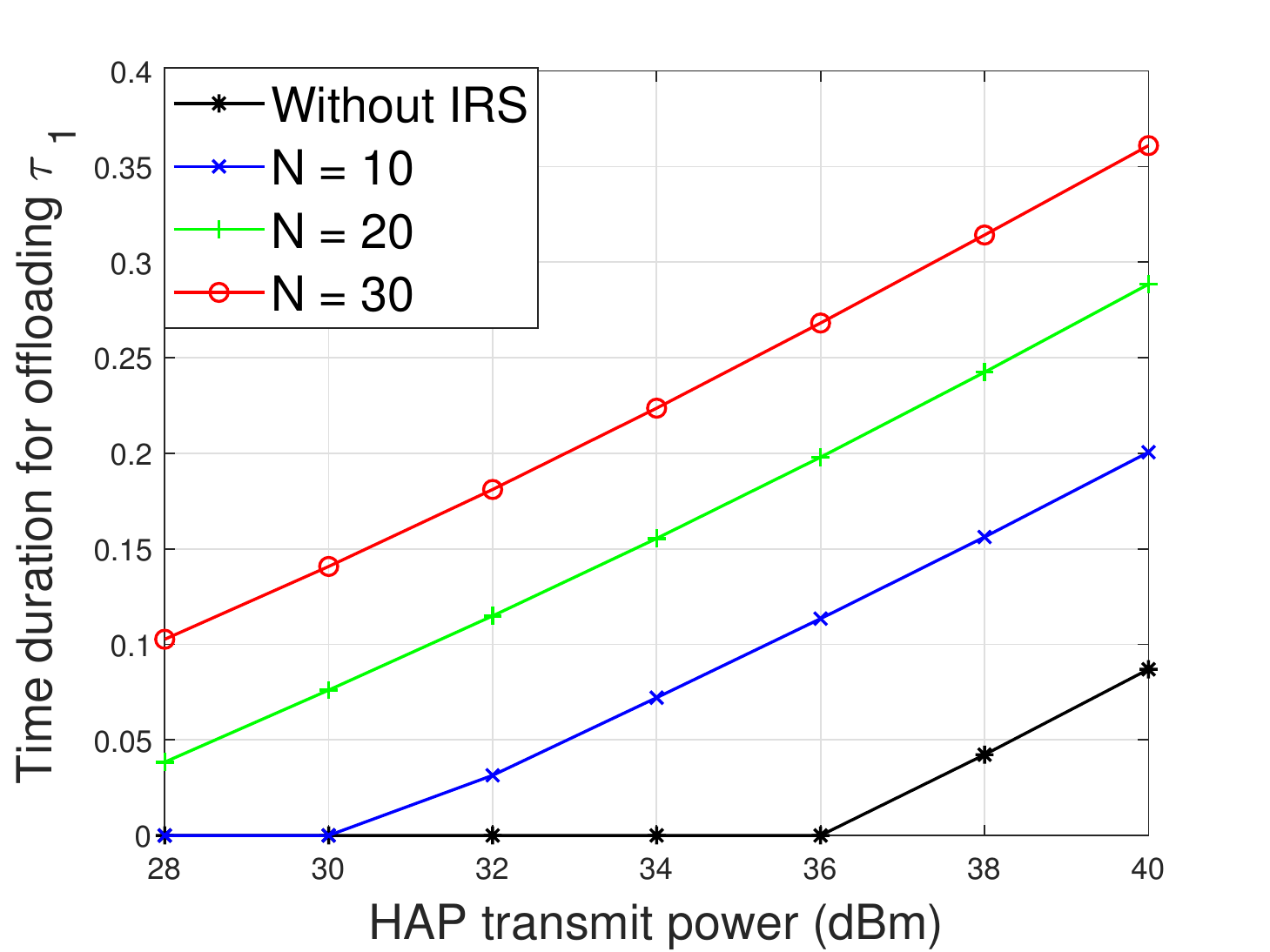}}
\subfigure[$C = 800$ cycles/bit.]{\label{activation_condition2}
\includegraphics[width= 2.5in, height=2in]{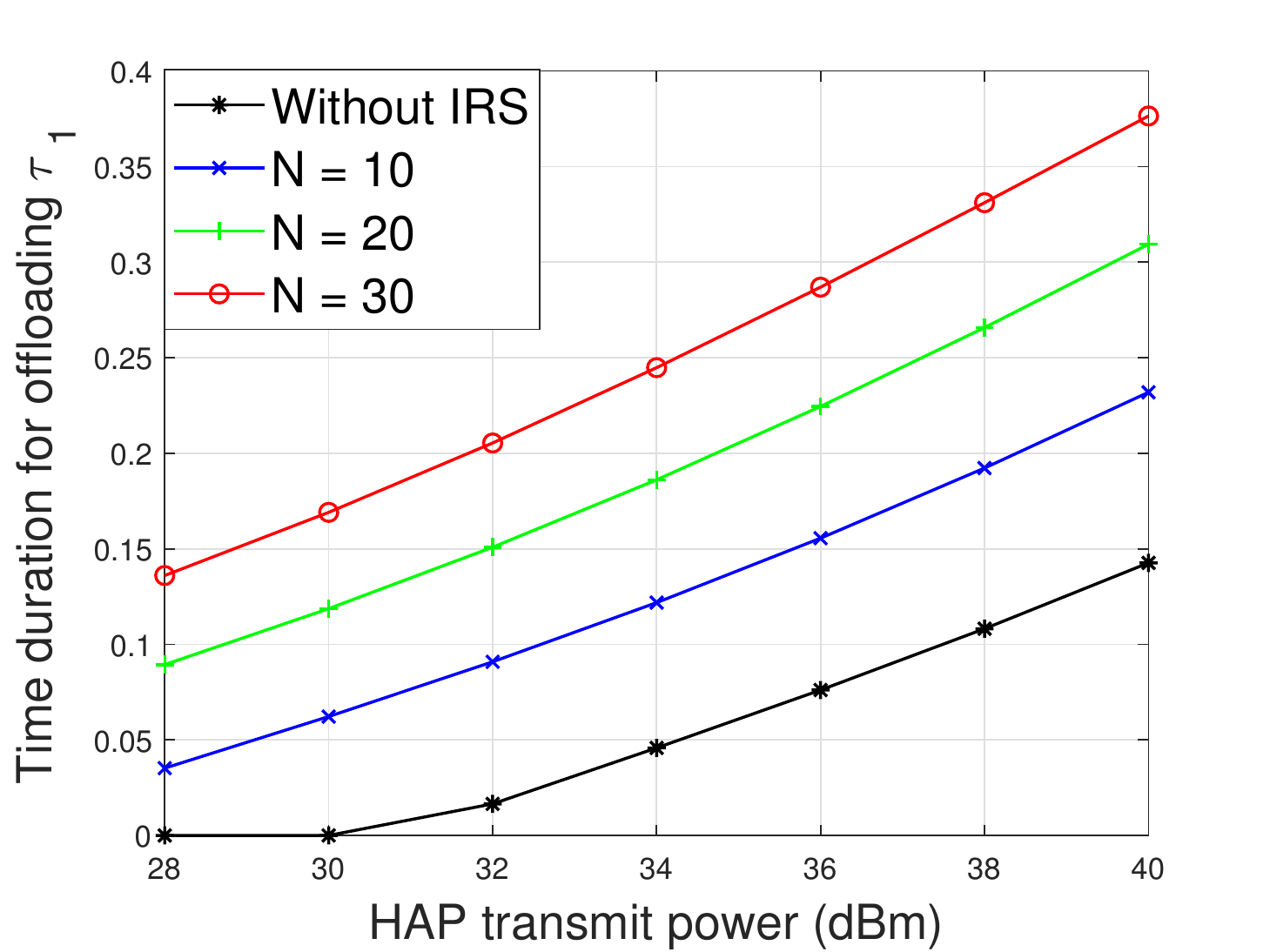}}
\setlength{\abovecaptionskip}{0.4cm}
\caption{{Illustration of UL offloading condition with different configurations.}}\label{activation contition}
\vspace{-16pt}
\end{figure*}
For the illustration of the optimal ${\tau _1}$, we assume that all the links are LoS dominated, i.e., the Rician factor is high. The time allocated to UL offloading under $C = \left[ {400,800} \right]$ cycles/bit is presented in Fig. \ref{activation contition}. The region where ${\tau _1} > 0$ indicates that the UL offloading is activated. It can be observed that the transmit power ${P_E}$ of the HAP which ensures ${\tau _1} > 0$ decreases as $N$ increases. This is consistent with Proposition 2, because the device is more likely to offload, when enjoying better channel conditions and the channel power gain can be improved by increasing the number of IRS elements $N$. Additionally, we can observe that the device tends to offload, when $C$ becomes higher. This is because the computation rate attained by local computing becomes marginal at large $C$, which forces the device to offload more tasks for improving the computation rate. This is also consistent with \eqref{C30dev}, namely that the threshold used for activating UL offloading decreases as $C$ increases.
\vspace{-10pt}
\subsection{Performance Comparison}
Next, we consider a general multi-user setup to provide further performance comparisons and for demonstrating the efficiency of the proposed solutions. Specifically, five devices are uniformly and randomly distributed within a radius of 1.5 m centered at $\left( {10,0,0} \right)$ m. The small scale fading of all links is characterized by a Rician factor of $2$. We set ${P_E} = 40$ dBm in this subsection.

\subsubsection{Efficiency of IRSs in WP-MEC Systems}
To demonstrate the efficiency of IRSs in WP-MEC systems, the following benchmark schemes are considered for comparision: 1) Proposed AO in Algorithm 1 to solve $\left( {{\mathop{\rm P}\nolimits} _{{\rm{TDMA}}}^{{\rm{case1}}}} \right)$; 2) Fixed WPT time but optimizing all other variables; 3) Fixed IRS phase shifts but optimizing all other variables; 4) Without IRS. In Fig. \ref{RA}, we plot the average total number of computed bits versus the number of IRS elements. It is observed that the average total number of computed bits output by our proposed Algorithm 1 over the benchmark schemes increases upon increasing $N$. Additionally, the performance gain of the scheme using the fixed phase shifts over the system without IRS is marginal, which highlights the importance of carefully designing the IRS BF. Moreover, for small $N$, the scheme using fixed WPT duration performs even worse than that without IRS, but for large $N$, it significantly outperforms the fixed phase shifts-based scheme. This demonstrates that the gain of IRS BF compensates the performance loss due to a fixed WPT duration. Nevertheless, the results unveil the importance of the joint design of the WPT duration and IRS BF.

\begin{figure}[t!]
\centering
\includegraphics[width=3in]{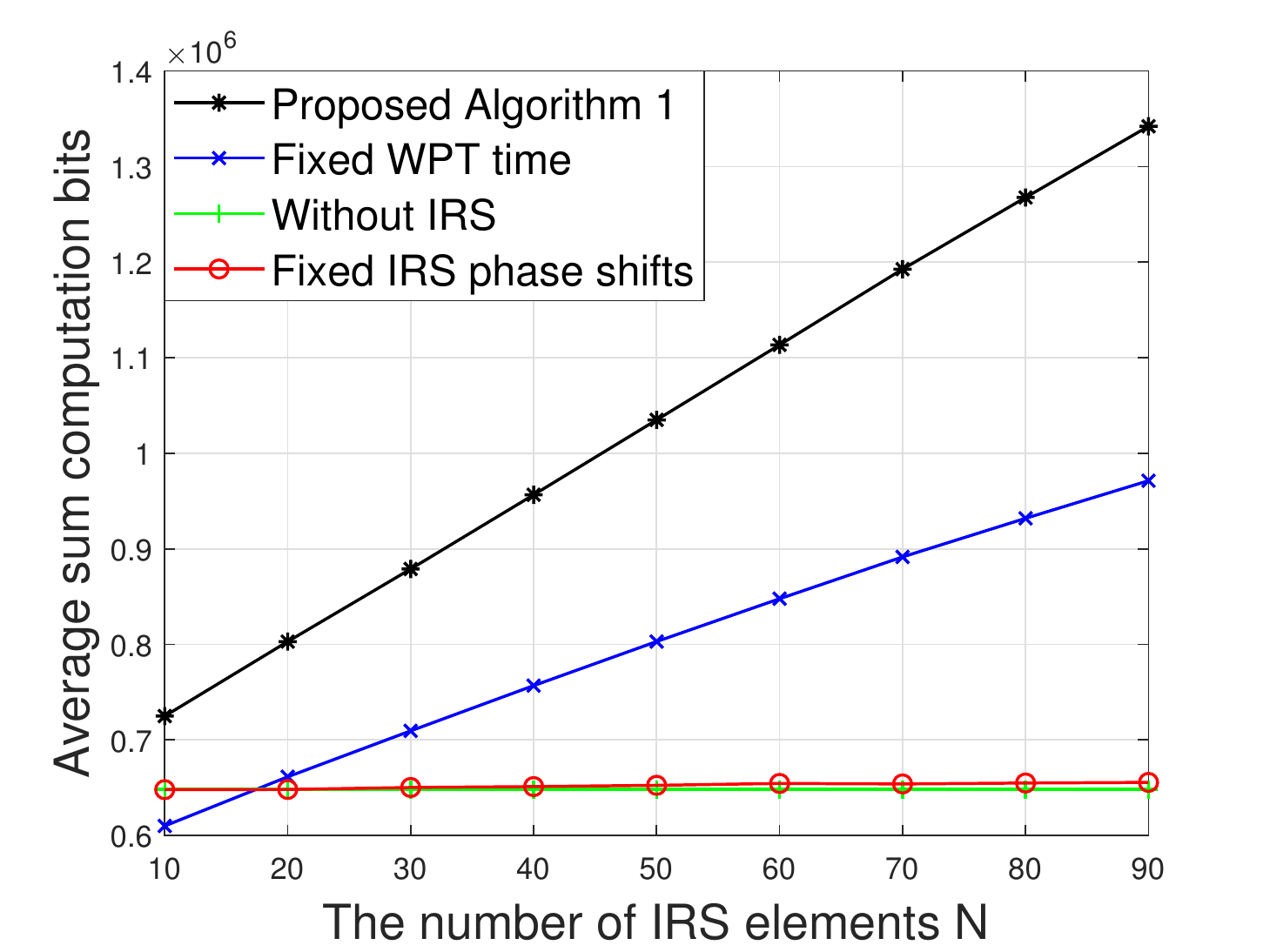}
\caption{{Performance Comparison with different resource allocation schemes when $C = 2000$ cycles/bit. }}
\label{RA}
\vspace{-16pt}
\end{figure}

\begin{figure*}[t!]
\centering
\subfigure[Impact of $N$ on WPT time.]{\label{WPT time}
\includegraphics[width= 2.5in, height=2.0in]{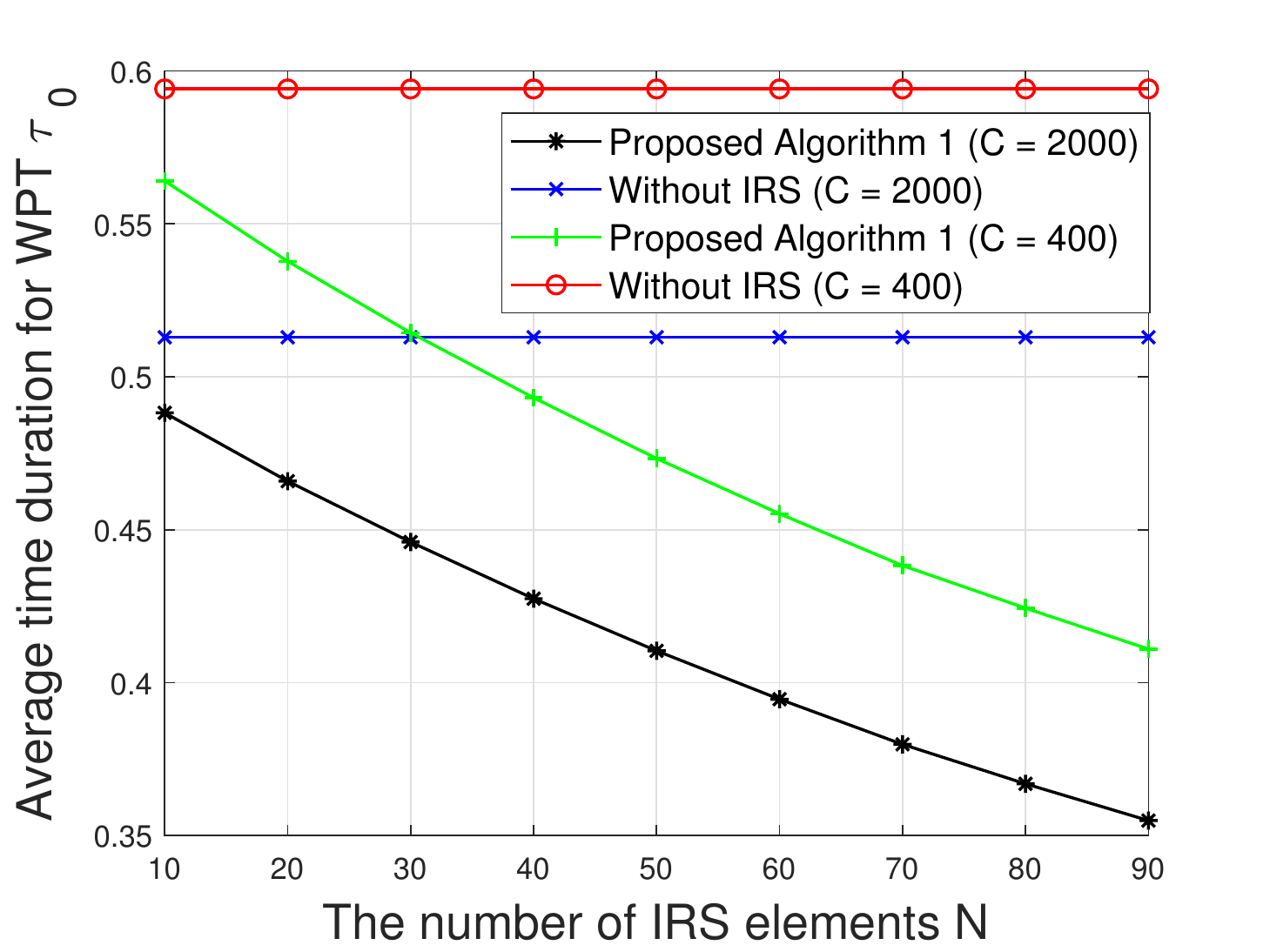}}
\subfigure[Impact of $N$ on the harvested energy.]{\label{Harvested energy}
\includegraphics[width= 2.5in, height=2.0in]{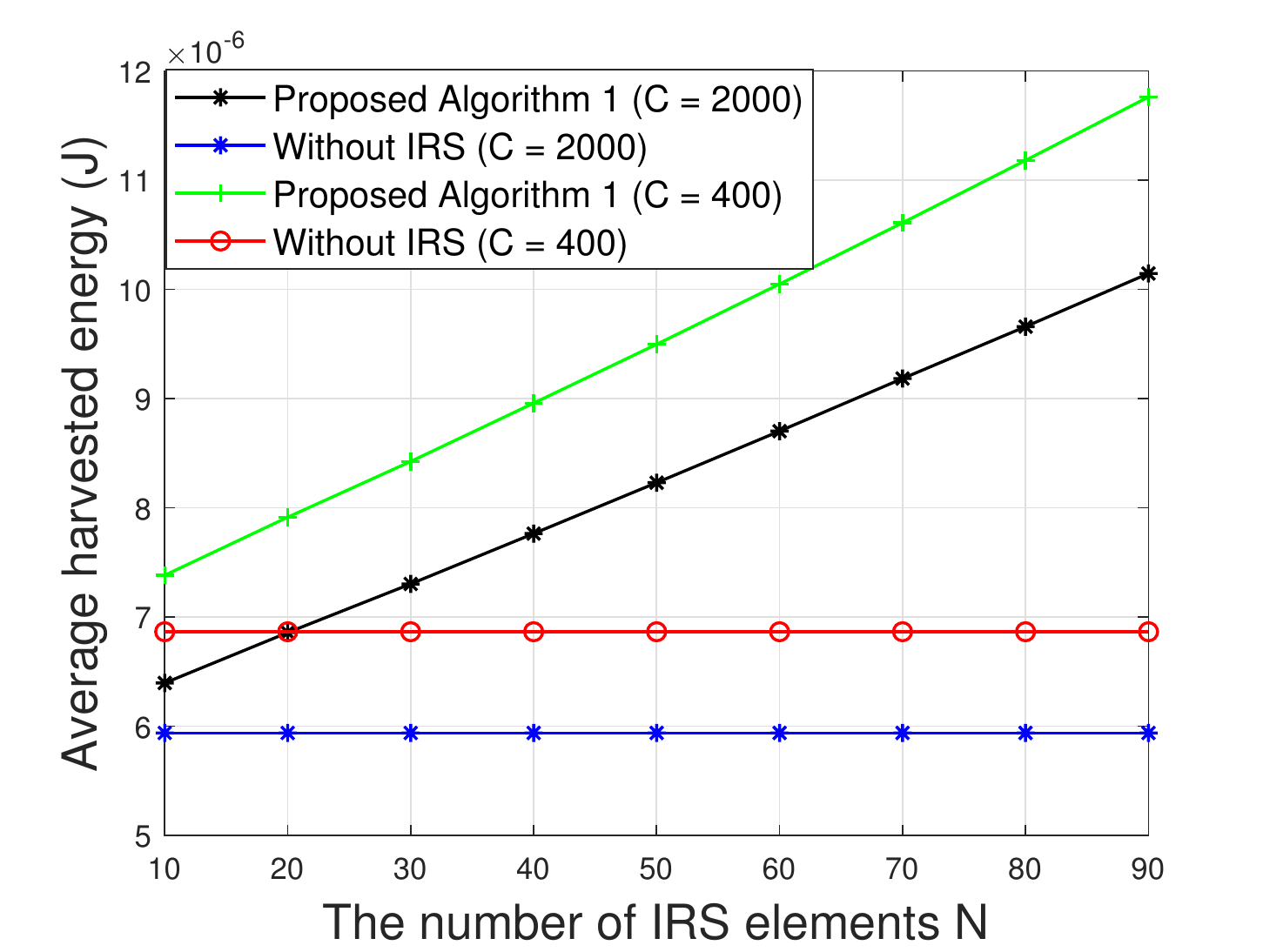}}
\setlength{\abovecaptionskip}{0.4cm}
\caption{{Impact of $N$ on WPT time and harvested energy.}}\label{time and energy}
\vspace{-16pt}
\end{figure*}

To further demonstrate the benefits brought out by IRSs for WP-MEC systems, we investigate the impact of $N$ both on the DL WPT duration and on the total energy harvested at each device. As shown in Fig. \ref{WPT time}, the optimized WPT duration decreases with $N$ for both $C = 400$ and $C = 2000$, which indicates that the energy consumed at the HAP, namely ${E_{{\rm{HAP}}}} = {\tau _0}{P_E}$, can be reduced by increasing $N$. Meanwhile, more time can be reserved for each device's UL offloading, which increases the total number of computed bits. This implies that embedding IRSs into WP-MEC systems achieves both computation rate improvements and energy consumption reductions. Although a higher $N$ leads to a reduced DL WPT time ${\tau _0}$, Fig. \ref{Harvested energy} shows that the total energy harvested by each device even increases with $N$. This is because the energy signal reflected by the IRS towards devices becomes more focused, which in turn improves the efficiency of WPT upon increasing $N$. Indeed, the high passive BF gain attained by IRSs increases the degrees of freedom for enhancing the flexibility of resource allocation design. Thanks to the improved channel conditions granted for both the DL WPT and UL offloading links, more time is available for offloading, while maintaining sufficient harvested energy, which achieves substantial computation rate improvements.

\subsubsection{Comparison of Different Computational Modes}
For comparison, we consider different computational modes as follows: 1) The partial offloading mode: Algorithm 1 is applied for solving $\left( {{\mathop{\rm P}\nolimits} _{{\rm{TDMA}}}^{{\rm{case1}}}} \right)$; 2) Offloading only: The algorithm proposed in \cite{9400380} is adopted for solving the resultant problem, when all devices only perform UL offloading; 3) Local computing only: All the devices only perform local computing and ${{\bf{v}}_0}$ is optimized based on the method in Section V-A. The average total number of computed bits versus $N$ is plotted in Fig. \ref{Computation modes} for different values of $C$. As expected, the partial offloading mode performs the best among all the specific schemes. The reason for this trend is that all the devices can flexibly select their computational mode based on the specific channel conditions under the partial offloading mode. Additionally, the offloading only mode significantly outperforms the local computing mode. This is because the IRS can only benefit the DL WPT for local computing mode, while the efficiency of both DL WPT and UL offloading can be improved with the aid of IRSs for the offloading-only mode. Moreover, it is observed that the gain of partial offloading mode over the offloading only-mode becomes marginal for large $C$, since the number of bits computed locally is low for a high $C$.
\begin{figure}
\begin{minipage}[t]{0.45\linewidth}
\centering
\includegraphics[width=2.9in]{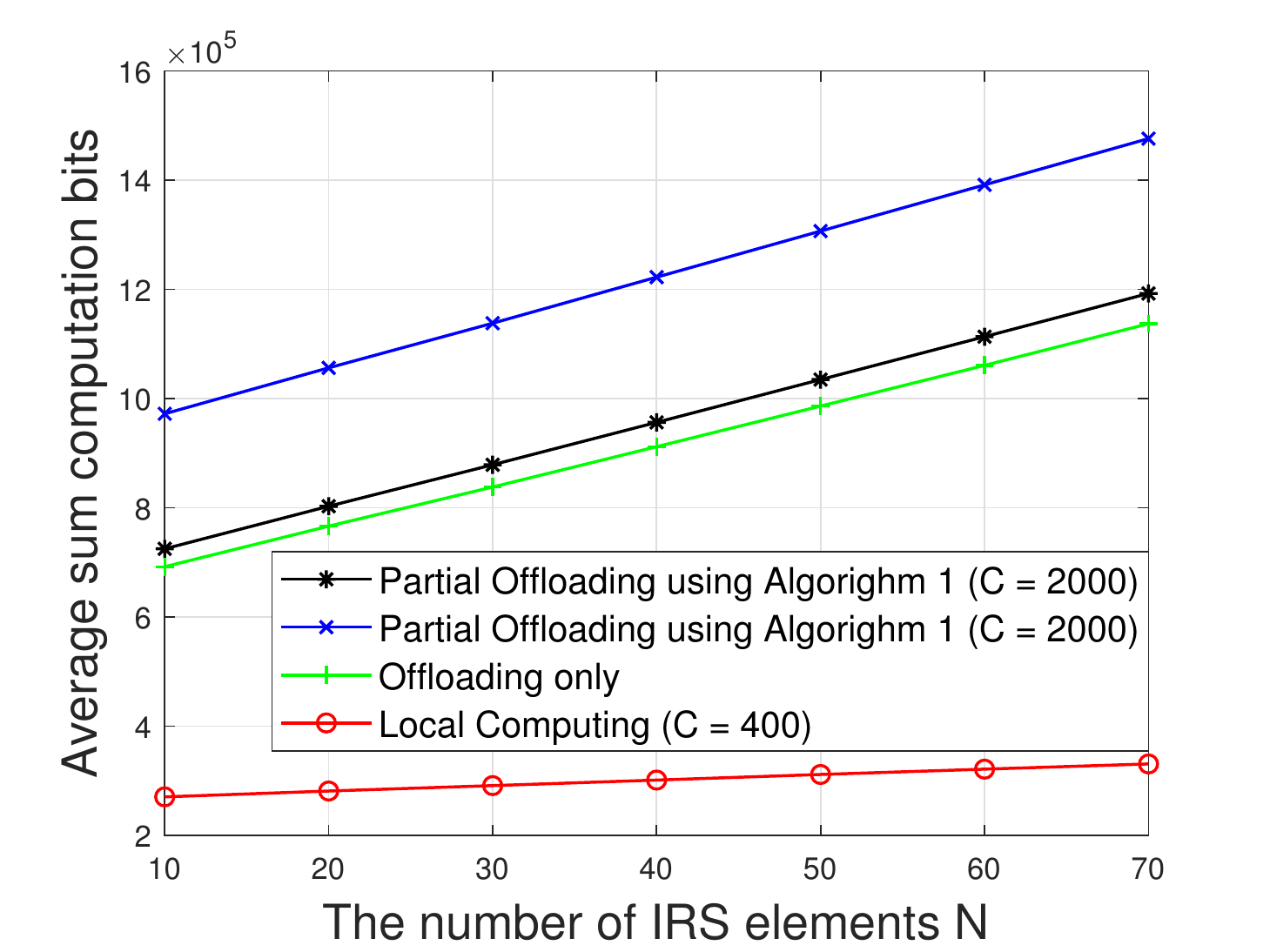}
\caption{Performance comparison of different computation modes.}
\label{Computation modes}
\end{minipage}%
\hfill
\begin{minipage}[t]{0.45\linewidth}
\centering
\includegraphics[width=2.9in]{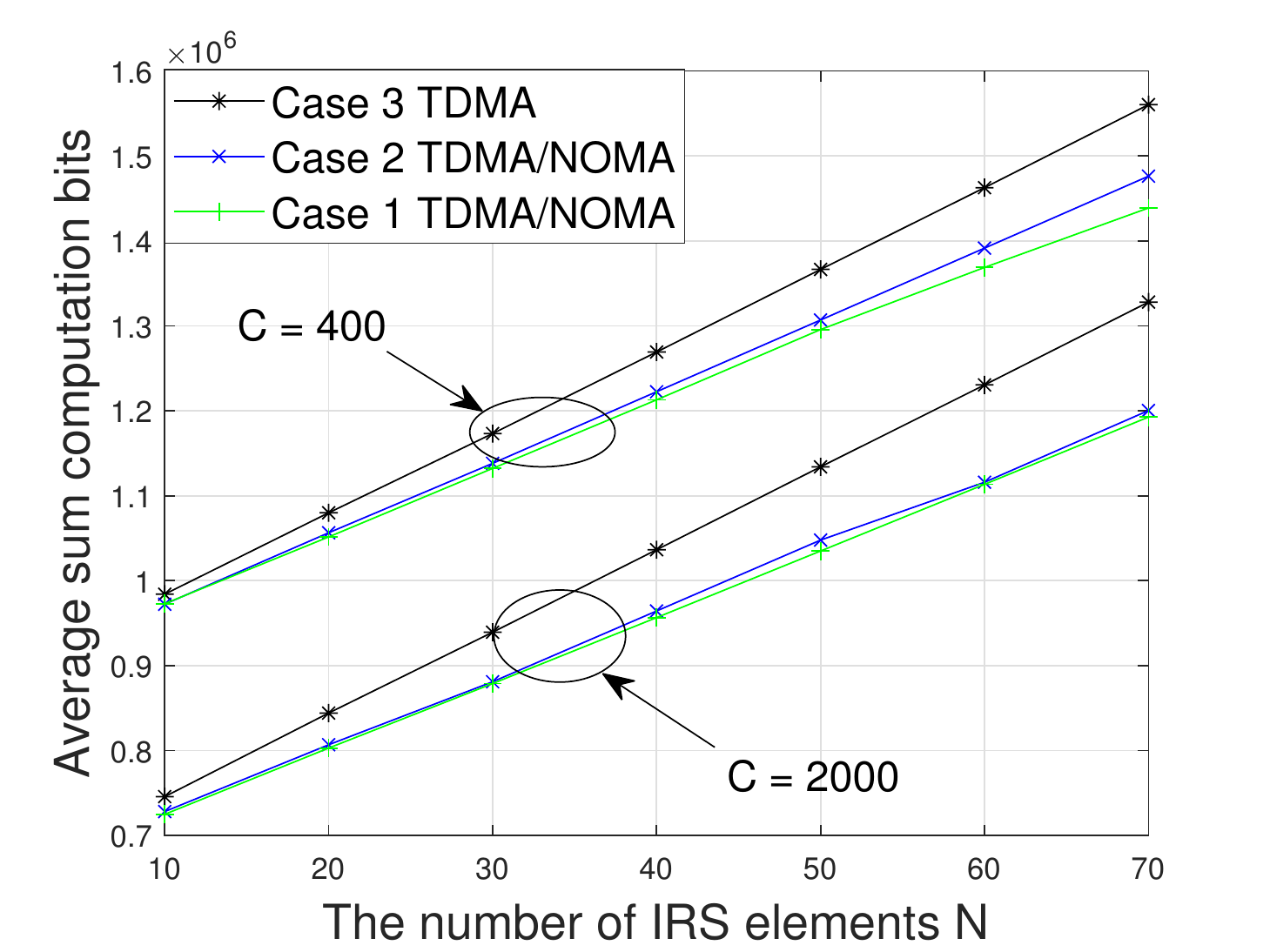}
\caption{Performance comparison of different DIBF schemes.}
\label{Dynamical BF}
\end{minipage}
\vspace{-16pt}
\end{figure}

\subsubsection{Impact of DIBF}
To illustrate the impact of DIBF on the computation rate, the average total number of computed bits versus $N$ is presented in Fig. \ref{Dynamical BF} for three cases. Note that the performance of NOMA is the same as that of TDMA for both \textbf{Case 1} and \textbf{Case 2}. It is observed in Fig. \ref{Dynamical BF} that the computation rate difference between \textbf{Case 3} and \textbf{Case 1}/\textbf{Case 2} expands as $N$ increases, which highlights the potential benefits of using dynamic IRS BF in TDMA-based UL offloading. The results reveal that the performance of TDMA may in fact become better than NOMA by using different IRS BF vectors for UL offloading in IRS-aided WP-MEC systems. Additionally, the performance gain of \textbf{Case 3} over \textbf{Case 1}/\textbf{Case 2} becomes more significant, when $C$ is high. This is because using dedicated IRS BF vectors for UL offloading only improves the computation rate contributed by UL offloading, but has no effect on local computing. Furthermore, for high $C$, the computation rate is dominated by that of UL offloading, while that of local computing is negligible. Finally, \textbf{Case 2} only attains a marginal gain over \textbf{Case 1}, especially for high $C$, which is in line with our analysis in Remark 2. The results suggest that using the same IRS BF vector is appropriate for both the DL WPT and UL offloading in scenarios, where the devices have weak computing capability and/or the system is sensitive to the signaling overhead.

\section{Conclusion}

The achievable computation rate performance of IRS-aided WP-MEC systems was studied in this paper. By taking into account the interplay between IRSs and the MA schemes, we answered a fundamental question: Does NOMA improve the achievable computation rate of IRS-aided WP-MEC systems compared to traditional TDMA? We first unveiled that offloading using a TDMA scheme achieves a better computation rate than that of NOMA, when the IRS BF vector can be flexibly adapted for UL offloading. The results provide important guidelines for selecting MA schemes for UL offloading in IRS-aided WP-MEC systems: it is preferable to use TDMA instead of NOMA for improving computation rate at the cost of extra signaling overhead. Furthermore, we proposed computationally efficient algorithms for maximizing computation rate under different DIBF schemes. Our numerical results validated the efficiency of our design over the benchmark schemes and also confirmed the benefits of IRSs in WP-MEC systems under different setups. It is our hope that this paper could provide guideline towards the application and optimization of specific IoE scenarios in next generation wireless networks
via reasonably selecting MA schemes.

\vspace{-8pt}
\begin{appendices}

\vspace{-10pt}
\section{Proof of Theorem 1}
\vspace{-6pt}
The proof starts by showing that $R_{{\rm{TDMA}}}^{{\rm{case2}}} \le R_{{\rm{NOMA}}}^{{\rm{case2}}}$. We denote the set characterizing the devices whose UL offloading is activated as ${{{\cal K}_{{\rm{off}}}}}$. Given that ${{\bf{v}}_0} = {\bf{v}}_0^*,{{\bf{v}}_1} = {\bf{v}}_1^*,{f_k} = f_k^*$ for $({\mathop{\rm P}\nolimits} _{{\rm{TDMA}}}^{{\rm{case2}}})$, the optimal ${p_k}$ can be expressed as $p_k^* = \left( {{\tau _0}\eta {P_E}{{\left| {h_{d,k}^H + {\bf{q}}_k^H{\bf{v}}_0^*} \right|}^2} - T{\gamma _c}f_k^{*3}} \right){\left( {{\tau _{1,k}}} \right)^{ - 1}}$ for ${k \in {{\cal K}_{{\rm{off}}}}}$, because each device will deplete all of its  energy. Now, we first discuss some properties of $\tau _{1,k}^*$ and $\tau _0^*$. To this end, $({\mathop{\rm P}\nolimits} _{{\rm{TDMA}}}^{{\rm{case2}}})$ can be simplified by optimizing ${{\tau _0}}$ and ${{\tau _{1,k}}}$ as follows:
\begin{subequations}\label{C51}
\begin{align}
\label{C51-a}{\rm{ }}\mathop {\max {\rm{ }}}\limits_{{{\tau _{1,k}}},{\tau _0}}\;\;&B\sum\limits_{k \in {{\cal K}_{{\rm{off}}}}} {{\tau _{1,k}}{{\log }_2}\left( {1 + \frac{{{\tau _0}\eta {P_E}{{\left| {h_{d,k}^H + {\bf{q}}_k^H{\bf{v}}_0^*} \right|}^2} - T{\gamma _c}f_k^{*3}}}{{{\tau _{1,k}}{\sigma ^2}}}{{\left| {h_{d,k}^H + {\bf{q}}_k^H{\bf{v}}_1^*} \right|}^2}} \right)}\\
\label{C51-b}{\rm{s.t.}}\;\;&\eqref{C9-c}.
\end{align}
\end{subequations}
Note that problem \eqref{C51} is a convex problem and its Lagrangian function is
\begin{align}\label{C52dev}
{{\cal L}_{\left( {{\rm{TDMA}}} \right)}}\left( {{\tau _0},{{\tau _{1,k}}},\lambda } \right) =& B\sum\limits_{k \in {{\cal K}_{{\rm{off}}}}} {{\tau _{1,k}}{{\log }_2}\left( {1 + \frac{{{\tau _0}\eta {P_E}{{\left| {h_{d,k}^H + {\bf{q}}_k^H{\bf{v}}_0^*} \right|}^2} - T{\gamma _c}f_k^{*3}}}{{{\tau _{1,k}}{\sigma ^2}}}{{\left| {h_{d,k}^H + {\bf{q}}_k^H{\bf{v}}_1^*} \right|}^2}} \right)}  \nonumber\\
& + \lambda \left( {T - {\tau _0} - \sum\limits_{k = 1}^K {{\tau _{1,k}}} } \right),
\end{align}
where $\lambda  \ge 0$ is the dual variable associated with \eqref{C51-b}. According to the Karush-Kuhn-Tucker (KKT) conditions, we have
\begin{align}\label{C53dev}
&\frac{{\partial {{\cal L}_{\left( {{\rm{TDMA}}} \right)}}\left( {{\tau _0},{{\tau _{1,k}}},\lambda } \right)}}{{\partial {{\tau _{1,k}}}}} = M\left( {{\Gamma _k}} \right) \buildrel \Delta \over = B\left( {{{\log }_2}\left( {1 + {\Gamma _k}} \right) - \frac{{{\Gamma _k}}}{{\left( {1 + {\Gamma _k}} \right)\ln 2}}} \right) - \lambda  = 0,\\
&\frac{{\partial {{\cal L}_{\left( {{\rm{TDMA}}} \right)}}\left( {{\tau _0},{{\tau _{1,k}}},\lambda } \right)}}{{\partial {\tau _0}}} = B\frac{{{\tau _0}\eta {P_E}{{\left| {h_{d,k}^H + {\bf{q}}_k^H{\bf{v}}_0^*} \right|}^2}{{\left| {h_{d,k}^H + {\bf{q}}_k^H{\bf{v}}_1^*} \right|}^2}}}{{{\sigma ^2}\left( {1 + {\Gamma _k}} \right)\ln 2}} - \lambda  = 0{\rm{ }},
\end{align}
where
\begin{align}\label{C55dev}
{\Gamma _k} = \frac{{{\tau _0}\eta {P_E}{{\left| {h_{d,k}^H + {\bf{q}}_k^H{\bf{v}}_0^*} \right|}^2} - T{\gamma _c}f_k^{*3}}}{{{\tau _{1,k}}{\sigma ^2}}}{\left| {h_{d,k}^H + {\bf{q}}_k^H{\bf{v}}_1^*} \right|^2}
\end{align}
denotes the received signal-to-noise ratio (SNR) of device $k$. Since $M\left( {{\Gamma _k}} \right)$ is an increasing function with respect to ${\Gamma _k}$, each device shares the same SNR at the optimal solution, i.e., ${\Gamma _k} = {\Gamma _m} = {\Gamma ^*},\forall k,m$. In particular, ${\Gamma ^*}$ is the solution of the equation:
\begin{align}\label{C56dev}
H\left( \Gamma  \right) \buildrel \Delta \over = {\log _2}\left( {1 + \Gamma } \right) - \frac{\Gamma }{{\left( {1 + \Gamma } \right)\ln 2}} - \frac{{\eta {P_E}{{\left| {h_{d,k}^H + {\bf{q}}_k^H{\bf{v}}_0^*} \right|}^2}{{\left| {h_{d,k}^H + {\bf{q}}_k^H{\bf{v}}_1^*} \right|}^2}}}{{{\sigma ^2}\left( {1 + \Gamma } \right)\ln 2}} = 0,
\end{align}
which can be readily obtained by applying the bisection search method, since $H\left( \Gamma  \right)$ is an increasing function with respect to $\Gamma$. Accordingly, $\tau _{1,k}^*$ and $\tau _0^*$ are given by
\begin{align}\label{C57dev}
\tau _0^* = \frac{{T + \sum\nolimits_{k \in {{\cal K}_{{\rm{off}}}}} {\frac{{T{\gamma _c}f_k^{*3}{{\left| {h_{d,k}^H + {\bf{q}}_k^H{\bf{v}}_1^*} \right|}^2}}}{{{\Gamma ^*}{\sigma ^2}}}} }}{{1 + \sum\nolimits_{k \in {{\cal K}_{{\rm{off}}}}} {\frac{{\eta {P_E}{{\left| {h_{d,k}^H + {\bf{q}}_k^H{\bf{v}}_0^*} \right|}^2}{{\left| {h_{d,k}^H + {\bf{q}}_k^H{\bf{v}}_1^*} \right|}^2}}}{{{\Gamma ^*}{\sigma ^2}}}} }},\tau _{1,k}^*{\rm{ = }}\frac{{\tau _0^*\eta {P_E}{{\left| {h_{d,k}^H + {\bf{q}}_k^H{\bf{v}}_0^*} \right|}^2} - T{\gamma _c}f_k^{*3}}}{{{\Gamma ^*}{\sigma ^2}{{\left| {h_{d,k}^H + {\bf{q}}_k^H{\bf{v}}_1^*} \right|}^{ - 2}}}},
\end{align}
respectively. Since ${\Gamma _k}{\rm{ = }}{\Gamma _m} = {\Gamma ^*},\forall k,m$, the optimal value of $({\mathop{\rm P}\nolimits} _{{\rm{TDMA}}}^{{\rm{case2}}})$ can be rewritten as
\begin{align}\label{C58dev}
R_{{\rm{TDMA}}}^{{\rm{case2}}} \!\!=\!\! B\tau _1^*{\log _2}\left( {1 \!\!+\!\! \frac{{\sum\nolimits_{{k \in {{\cal K}_{{\rm{off}}}}}} {\left( {\tau _0^*\eta {P_E}{{\left| {h_{d,k}^H + {\bf{q}}_k^H{\bf{v}}_0^*} \right|}^2} - T{\gamma _c}f_k^{*3}} \right){{\left| {h_{d,k}^H + {\bf{q}}_k^H{\bf{v}}_1^*} \right|}^2}} }}{{\tau _1^*{\sigma ^2}}}} \right)\!\!+ \!\!\sum\limits_{k = 1}^K {\frac{{Tf_k^*}}{C}},
\end{align}
where $\tau _1^* = \sum\nolimits_{k \in {{\cal K}_{{\rm{off}}}}} {\tau _{1,k}^*}$. It is plausible that $\left\{ {\tau _0^*,\tau _1^*,p_k^*,{\bf{v}}_0^*,{\bf{v}}_1^*,f_k^*} \right\}$ is a feasible solution for  $({\mathop{\rm P}\nolimits} _{{\rm{NOMA}}}^{{\rm{case2}}})$, which yields $R_{{\rm{TDMA}}}^{{\rm{case2}}} \le R_{{\rm{NOMA}}}^{{\rm{case2}}}$.

Next, we show that $R_{{\rm{TDMA}}}^{{\rm{case2}}} \ge R_{{\rm{NOMA}}}^{{\rm{case2}}}$. At the optimal solution of $({\mathop{\rm P}\nolimits} _{{\rm{NOMA}}}^{{\rm{case2}}})$, we can always construct a new solution satisfying ${{\tilde \tau }_0} = \tau _0^ \star ,\sum\nolimits_{{k \in {{\cal K}_{{\rm{off}}}}}} {{{\tilde \tau }_{1,k}}}  = \tau _1^ \star $, so that all devices share the same received SNR in the TDMA case, i.e.,
\begin{align}\label{C59dev}
\frac{{\left( {{{\tilde \tau }_0}\eta {P_E}{{\left| {h_{d,k}^H + {\bf{q}}_k^H{\bf{v}}_0^ \star } \right|}^2} - T{\gamma _c}{{\left( {f_k^ \star } \right)}^3}} \right)}}{{{{\tilde \tau }_{1,k}}{\sigma ^2}{{\left| {h_{d,k}^H + {\bf{q}}_k^H{\bf{v}}_1^ \star } \right|}^{ - 2}}}} = \frac{{\left( {{{\tilde \tau }_0}\eta {P_E}{{\left| {h_{d,m}^H + {\bf{q}}_m^H{\bf{v}}_0^ \star } \right|}^2} - T{\gamma _c}{{\left( {f_m^ \star } \right)}^3}} \right)}}{{{{\tilde \tau }_{1,m}}{\sigma ^2}{{\left| {h_{d,m}^H + {\bf{q}}_m^H{\bf{v}}_1^ \star } \right|}^{ - 2}}}},\forall k \ne m.
\end{align}
It is easy to verify that $\left\{ {{{\tilde \tau }_0},{{\tilde \tau }_{1,k}},\forall k} \right\}$ is also feasible for $({\mathop{\rm P}\nolimits} _{{\rm{TDMA}}}^{{\rm{case2}}})$ and always exists, which yields
\begin{align}\label{C60dev}
\tilde R_{{\rm{TDMA}}}^{{\rm{case2}}}&= B\sum\limits_{{k \in {{\cal K}_{{\rm{off}}}}}} {{{\tilde \tau }_{1,k}}} {\log _2}\left( {1 + \frac{{{{\tilde \tau }_0}\eta {P_E}{{\left| {h_{d,m}^H + {\bf{q}}_m^H{\bf{v}}_0^ \star } \right|}^2} - T{\gamma _c}{{\left( {f_k^ \star } \right)}^3}}}{{{{\tilde \tau }_{1,k}}{\sigma ^2}}}{{\left| {h_{d,m}^H + {\bf{q}}_m^H{\bf{v}}_1^ \star } \right|}^2}} \right) + \sum\limits_{k = 1}^K {\frac{{Tf_k^ \star }}{C}} \nonumber\\
&\mathop  = \limits^{\left( a \right)}   B\tau _1^ \star {\log _2}\left( {1 + \frac{{\sum\nolimits_{{k \in {{\cal K}_{{\rm{off}}}}}} {\left( {{{\tilde \tau }_0}{{\left| {h_{d,m}^H + {\bf{q}}_m^H{\bf{v}}_0^ \star } \right|}^2} - T{\gamma _c}{{\left( {f_k^ \star } \right)}^3}} \right){{\left| {h_{d,m}^H + {\bf{q}}_m^H{\bf{v}}_1^ \star } \right|}^2}} }}{{\tau _1^ \star {\sigma ^2}}}} \right) + \sum\limits_{k = 1}^K {\frac{{Tf_k^ \star }}{C}} \nonumber\\
&{\rm{           = }}R_{{\rm{NOMA}}}^{{\rm{case2}}},
\end{align}
where (a) follows that $\sum\nolimits_{{k \in {{\cal K}_{{\rm{off}}}}}} {{{\tilde \tau }_{1,k}}}  = \tau _1^ \star $ and all devices share the same SNR at the constructed solution. At the optimal solution of $({\mathop{\rm P}\nolimits} _{{\rm{TDMA}}}^{{\rm{case2}}})$, it follows that $R_{{\rm{TDMA}}}^{{\rm{case2}}} \ge R_{{\rm{NOMA}}}^{{\rm{case2}}}$.

Given $R_{{\rm{TDMA}}}^{{\rm{case2}}} \le R_{{\rm{NOMA}}}^{{\rm{case2}}}$ and $R_{{\rm{TDMA}}}^{{\rm{case2}}} \ge R_{{\rm{NOMA}}}^{{\rm{case2}}}$, we have $R_{{\rm{TDMA}}}^{{\rm{case2}}} = R_{{\rm{NOMA}}}^{{\rm{case2}}}$, which thus completes the proof.
\vspace{-8pt}
\section{Proof of Proposition 1}
The partial Lagrangian function of problem \eqref{SURA} can be directly written as
\begin{align}\label{partial lagrangian}
{\cal L}\left( \Xi  \right) = B{\tau _1}{\log _2}\left( {1 + \frac{{eh}}{{{\tau _1}{\sigma ^2}}}} \right) + \frac{{Tf}}{C} + \lambda \left( {\eta {\tau _0}{P_E}h - {e_1} - T{\gamma _c}f_1^3} \right){\rm{ + }}\mu \left( {T - {\tau _0} - {\tau _1}} \right),
\end{align}
where  $\Xi  = \left\{ {{\tau _0},{\tau _1},e,f,\lambda ,u} \right\}$, $\lambda  \ge 0 $ and $\mu \ge 0$ are the corresponding Lagrange multipliers. For a convex problem, the optimal solution can be obtained through analyzing the KKT conditions. Taking the partial derivative of ${\cal L}\left( \Xi  \right)$  with respect to ${{\tau _1}}$, ${{\tau _0}}$, $e$, and $f$, respectively, we have
\begin{align}\label{KKT}
&\frac{{\partial {\cal L}\left( \Xi  \right)}}{{\partial {\tau _1}}} = B{\log _2}\left( {1 + \frac{{eh}}{{{\tau _1}{\sigma ^2}}}} \right) - \frac{{Beh}}{{\left( {{\tau _1}{\sigma ^2} + eh} \right)\ln 2}} - u,\\
&\frac{{\partial {\cal L}\left( \Xi  \right)}}{{\partial {\tau _0}}} = \lambda \eta {P_E}h - u,\\
&\frac{{\partial {\cal L}\left( \Xi  \right)}}{{\partial e}} = \frac{{B{\tau _1}h}}{{\left( {{\tau _1}{\sigma ^2} + eh} \right)\ln 2}} - \lambda,\\
&\frac{{\partial {\cal L}\left( \Xi  \right)}}{{\partial f}}{\rm{ = }}\frac{T}{C} - 3\lambda T{\gamma _c}f.
\end{align}

Note that ${\tau _0} > 0$ always holds at the optimal solution. If UL offloading is activated at the optimal solution, i.e., ${\tau _1} > 0,e > 0$, we have $\frac{{\partial {\cal L}\left( \Xi  \right)}}{{\partial {\tau _0}}} = 0$, $\frac{{\partial {\cal L}\left( \Xi  \right)}}{{\partial {\tau _1}}} = 0$, $\frac{{\partial {\cal L}\left( \Xi  \right)}}{{\partial e}} = 0$, and $\frac{{\partial {\cal L}\left( \Xi  \right)}}{{\partial f}} = 0$. In this case, the optimal transmit power denoted by ${p^*}$  satisfies
\begin{align}\label{optimal_p}
{\log _2}\left( {1 + \frac{{{p^*}h}}{{{\sigma ^2}}}} \right) - \frac{{{p^*}h}}{{\left( {{\sigma ^2} + {p^*}h} \right)\ln 2}} - \sum\limits_{k = 1}^K {\frac{{\eta {P_E}{h^2}}}{{\left( {{\sigma ^2} + {p^*}h} \right)\ln 2}}}  = 0,
\end{align}
which yields~\eqref{C30dev}. Accordingly, the optimal local computing frequency denoted by ${f^*}$ is
\begin{align}\label{optimal_f}
{f^*} = \sqrt {\frac{1}{{3C{\lambda ^*}{\gamma _c}}}}  = \sqrt {\frac{{\left( {{\sigma ^2} + {p^*}h} \right)\ln 2}}{{3Ch{\gamma _c}B}}}.
\end{align}
Since the device depletes all of its harvested energy and ${\tau _0} + {\tau _1} = T$, we have
\begin{align}\label{optimal_tao1}
{\tau _1} = \frac{{\eta {P_E}h - {\gamma _c}{{\left( {\frac{{\left( {{\sigma ^2} + {p^*}h} \right)\ln 2}}{{3Ch{\gamma _c}B}}} \right)}^{\frac{3}{2}}}}}{{{p^*} + \eta {P_E}h}}T.
\end{align}
According to~\eqref{optimal_tao1}, the condition in~\eqref{C31dev} must hold if UL offloading is activated, i.e., ${\tau _1} > 0$. On the other hand, if the condition in \eqref{C31dev} holds, a non-trivial ${\tau _1} > 0$ can be obtained from \eqref{optimal_tao1}, which satisfies KKT conditions and it is hence optimal for problem \eqref{SURA} due to its convexity.
\end{appendices}

\bibliographystyle{IEEEtran}
\vspace{-8pt}
\bibliography{myref}


\end{document}